\documentclass[11pt,letterpaper]{article}
\usepackage{color,ifpdf,latexsym,graphicx,url}
\urlstyle{rm}
\usepackage[margin=1in]{geometry}
\usepackage{palatino,mathpazo}
\usepackage{amsmath, amssymb, amsthm, amsfonts,alltt,clrscode}
\usepackage{array,colortbl,hhline,xcolor}
\usepackage{subcaption}

\title{Traversability, Reconfiguration, and Reachability in the Gadget Framework}
\author{Joshua Ani, Erik Demaine, Yevhenii Diomidov, Dylan Hendrickson, Jayson Lynch}

\usepackage{hyperref}
\hypersetup{breaklinks,bookmarksnumbered,bookmarksopen,bookmarksopenlevel=2}
{\makeatletter \hypersetup{pdftitle={\@title}}}

{\makeatletter \gdef\fps@figure{!htbp}}


\let\realbfseries=\bfseries
\def\bfseries{\realbfseries\boldmath}

\renewcommand\emph[1]{\textbf{\textit{\boldmath #1}}}

\newtheorem{theorem}{Theorem}[section]
\newtheorem{lemma}[theorem]{Lemma}
\newtheorem{corollary}[theorem]{Corollary}
\newtheorem{definition}[theorem]{Definition}
\newtheorem{problem}[theorem]{Problem}

\newcommand{\rdni}{non-interacting box gadget}

\graphicspath{{figures/}}

\begin{document}

\maketitle

\begin{abstract}
Consider an agent traversing a graph of ``gadgets'', each with local state
that changes with each traversal by the agent.
We characterize the complexity of \emph{universal traversal}, where the goal
is to traverse every gadget at least once, for DAG gadgets, one-state gadgets,
and reversible deterministic gadgets.
We also study the complexity of \emph{reconfiguration},
where the goal is to bring the system of gadgets to a specified state,
proving many cases PSPACE-complete, and showing in some cases that
reconfiguration can be strictly harder than \emph{reachability}
(where the goal is for the agent to reach a specified location),
while in other cases, reachability is strictly harder than reconfiguration.
\end{abstract}

\section{Introduction}

The \emph{motion-planning-through-gadgets framework}, introduced in \cite{gadgets} and further developed in \cite{gadgets2}, captures a broad range of combinatorial motion-planning problems. It also serves as a powerful tool for proving hardness of games and puzzles that involve an agent moving in and interacting with an environment where the goal is to reach a specified location. Prior work \cite{gadgets2} fully characterizes the complexity of 1-player motion planning with two natural classes of gadgets: \emph{DAG $k$-tunnel} gadgets, which naturally lead to bounded games, and \emph{reversible deterministic $k$-tunnel} gadgets, which naturally lead to unbounded games. It also extends the gadget model to 2-player and team imperfect information variants, and provides full or partial characterizations for the bounded and unbounded versions of each.

All of the prior work considers \emph{reachability}, where the decision problem is whether the agent can reach the target location. In this paper, we begin extending the gadget model to victory conditions other than reaching a target location. In particular we examine the complexity of reconfiguring a system of gadgets and of visiting every single gadget.

We consider the \emph{universal traversal} problem of whether the agent can visit every gadget. In Section~\ref{sec:visit every}, we characterize the complexity of this problem for three classes of $k$-tunnel gadgets: DAG gadgets, one-state gadgets, and reversible deterministic gadgets. Of particular note is that universal traversal can be harder than reachability for the same gadget. In particular, there are DAG $k$-tunnel gadgets for which reachability is in P but universal traversal is NP-complete. Additionally, reachability for one-state gadgets is always in P, but universal traversal can be NP-complete.

In Section~\ref{sec:reconfiguration} we consider the \emph{reconfiguration} problem of whether the agent can cause the gadgets to reach a target configuration. We show a gadget with non-interacting tunnels whose reconfiguration problem is PSPACE-complete, but whose reachability problem is in P. We also show that for reversible, deterministic gadgets PSPACE-completeness of the reachability problem implies PSPACE-completeness of the reconfiguration problem. In contrast we exhibit a gadget for which the reconfiguration problem becomes easier, contained in P whereas reachability is NP-complete. We also extend the results in \cite{gadgets2} to show a larger class of gadgets which are in NP and describe a sub-class of those which are in P. The gadgets framework has already been used to prove complexity results about reconfiguration problems related to swarm \cite{balanza2019full} and modular robotics \cite{akitaya2021characterizing}, so understanding reconfiguration in the gadgets model may provide an easier and more powerful base for such applications.

\section{Gadget Model}

We now define the gadget model of motion planning, introduced in \cite{gadgets}.

A \emph{gadget} consists of a finite number of
\emph{locations} (entrances/exits) and a finite number of \emph{states}.
Each state $S$ of the gadget defines a labeled directed graph on the locations,
where a directed edge $(a,b)$ with label $S^\prime$ means that an agent
can enter the gadget at location $a$ and exit at location $b$, changing the state of the gadget from $S$ to $S^\prime$. Each of these arcs is called a \emph{transition}. Sometimes we will discuss a \emph{traversal} from some location $a$ to location $b$ which refers to any possible transition from $a$ to $b$ in state $s$. Different states in a gadget can have different transitions while having the same traversability.
Equivalently, a gadget is specified by its \emph{transition graph},
a directed graph whose vertices are state/location pairs,
where a directed edge from $(S,a)$ to $(S',b)$ represents that the agent
can traverse the gadget from $a$ to $b$ if it is in state~$S$,
and that such traversal will change the gadget's state to~$S^\prime$. 
Gadgets are \emph{local} in the sense that traversing a gadget does
not change the state of any other gadgets. An example can be seen in Figure~\ref{fig:L2T}

\begin{figure}
    \centering	\includegraphics[scale=1]{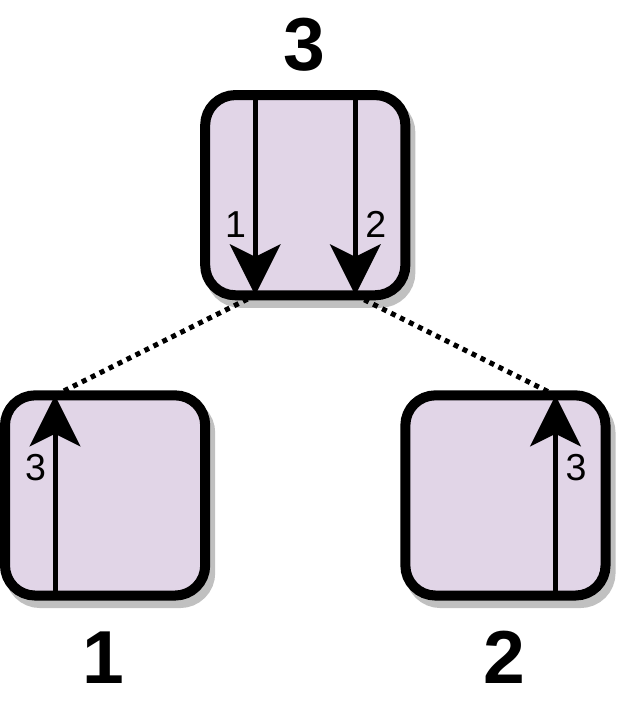}
	\caption{A diagram describing the locking 2-toggle gadget. Each box represents the gadget in a different state, in this case labeled with the numbers $1,2,3$. Arrows represent transitions in the gadget and are labeled with the states to which those transition take the gadget. In the top state 3, the agent can traverse either tunnel going down, which blocks off the other tunnel until the agent reverses the initial traversal.}
	\label{fig:L2T}
\end{figure}

A \emph{system of gadgets} consists of gadgets, the initial state of each gadget, and
a \emph{connection graph} on the gadgets' locations.
If two locations $a$ and $b$ of two gadgets (possibly the same gadget) are connected
by a path in the connection graph, then an agent can traverse freely between
$a$ and~$b$ (outside the gadgets). The \emph{configuration} of a system of gadgets is the system along with a state for each of the gadgets in the system.
(Equivalently, we can think of locations $a$ and $b$ as being identified,
effectively contracting connected components of the connection graph.)
These are all the ways that the agent can move: exterior to gadgets using
the connection graph, and traversing gadgets according to their current states.

Previous work has focused on the reachability problem \cite{gadgets,gadgets2}:

\begin{definition}
	For a set of gadgets gadget $F$, \emph{reachability for $F$} is the following decision problem. Given a system of gadgets consisting of copies of gadgets in $F$, and a starting location and a win location in that system of gadgets, is there a path the agent can take from the starting location to the win location?
\end{definition}

A 2-player and a team multiplayer version of this problem was also defined and investigated\cite{gadgets2}.

We will consider several specific classes of gadgets.

\begin{definition}
	A \emph{$k$-tunnel} gadget has $2k$ locations, which are partitioned into $k$ pairs called \emph{tunnels}, such that every transition is between two locations in the same tunnel.
\end{definition}

\begin{definition}
	The \emph{state-transition graph} of a gadget is the directed graph which has a vertex for each state, and an edge $S\to S^\prime$ for each transition from state $S$ to $S^\prime$. A \emph{DAG} gadget is a gadget whose state-transition graph is acyclic.
\end{definition}

DAG gadgets naturally lead to bounded problems, since they can be traversed a bounded number of times. The complexity of the reachability problem for DAG $k$-tunnel gadgets, as well as the 2-player and team games, is characterized in \cite{gadgets2}.

\begin{definition}
	A gadget is \emph{deterministic} if every traversal can put it in only one state and every location has at most
	1 traversal from it.
	More precisely, its transition graph has maximum out-degree 1.
\end{definition}

\begin{definition}
	A gadget is \emph{reversible} if every transition can be reversed. More precisely, its transition graph is undirected.
\end{definition}

\begin{definition}
	A gadget has a \emph{distant opening} if a transition in some state across a tunnel which opens a different tunnel. A tunnel is \emph{opened} if a transition has taken it from a state where the tunnel did not have traversability in some direction to a state where it is now traversable.
\end{definition}

Reversible deterministic gadgets are gadgets whose transition graphs are partial matchings, and they naturally lead to unbounded problems. Prior work \cite{gadgets2} characterizes the complexity of reachability for reversible deterministic $k$-tunnel gadgets and partially characterizes the complexity of the 2-player and team games.

In Section~\ref{sec:stateless visit}, we also consider \emph{one-state}, $k$-tunnel gadgets. A transition in a gadget with only one state cannot change the state, so the legal traversals never changes.

\section{Universal Traversal}\label{sec:visit every}

In this section, we consider the question of whether an agent in a system of gadgets can make a traversal across every gadget, called the \emph{universal traversal} problem.

\begin{definition}
	For a gadget $G$, \emph{universal traversal for $G$} is the following decision problem. Given a system of gadgets consisting of copies of $G$ and a starting location, is there a path the agent can take from the starting location which makes at least one traversal in every gadget?
\end{definition}

We provide a full characterization for the complexity of this problem for three classes of gadgets. In Section~\ref{sec:bounded visit}, we characterize DAG $k$-tunnel gadgets. Universal traversal is NP-hard for some DAG gadgets where reachability is in P. This is somewhat similar to the distinction between finding paths and finding Hamiltonian paths. In Section~\ref{sec:stateless visit}, we further emphasize this difference by characterizing one-state $k$-tunnel gadgets. Reachability is always in NL for one-state gadgets, but we find that universal traversal is often NP-complete. Finally, Section~\ref{sec:unbounded visit} considers the unbounded case by characterizing universal traversal for reversible deterministic $k$-tunnel gadgets. In this case, the dichotomy is the same as for reachability.

\begin{lemma}\label{lem:visiting in pspace}
	Universal traversal for any gadget is in PSPACE.
\end{lemma}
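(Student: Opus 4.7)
My plan is to reduce universal traversal to a reachability question in an exponentially large configuration graph whose vertices admit polynomial-size descriptions, and then invoke the standard fact that such reachability is in $\mathrm{NPSPACE} = \mathrm{PSPACE}$.

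First I would define a \emph{global configuration} of the system to consist of three pieces: the agent's current location (an index into the locations of the system), the current state of each of the $n$ gadgets, and a bit for each gadget recording whether the agent has already performed at least one traversal across that gadget. Since each gadget has only a finite number of states, a global configuration can be written down using $O(n \log s + n + \log L)$ bits, where $s$ bounds the number of states per gadget and $L$ is the number of locations in the system; this is polynomial in the input size. The initial configuration is determined by the starting location, the given initial gadget states, and the all-zeros visited-vector; the accepting configurations are those whose visited-vector is all ones.

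Next, I would observe that from any global configuration, the set of configurations reachable in one ``step'' is computable in polynomial time: the agent either moves along a connected component of the connection graph (leaving states unchanged), or performs a single transition across a gadget in its current state, updating that gadget's state and setting its visited bit. A nondeterministic algorithm can therefore start at the initial configuration, repeatedly guess a next configuration, verify the step is legal in polynomial time, overwrite the current configuration in place, and accept as soon as the visited-vector becomes all ones. This uses only polynomial space, so universal traversal is in $\mathrm{NPSPACE}$, which equals $\mathrm{PSPACE}$ by Savitch's theorem.

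The only thing to be careful about is making sure the algorithm terminates: one should cap the number of guessed steps by the total number of distinct global configurations, which is at most exponential and hence representable by a polynomial-size counter. With that safeguard, the argument is complete. I do not expect any real obstacle here; the lemma is essentially a configuration-space observation, and the content is simply verifying that configurations fit in polynomial space and that transitions are polynomial-time checkable.
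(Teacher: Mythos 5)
Your proposal is correct and follows essentially the same route as the paper: nondeterministically guess traversals while tracking the gadget states and which gadgets have been visited in polynomial space, then conclude via Savitch's theorem that NPSPACE $=$ PSPACE. The extra details you give (explicit configuration encoding and a step counter for termination) are just a more careful spelling-out of the same argument.
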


\begin{proof}
	We can easily solve the universal traversal problem in nondeterministic polynomial space: repeatedly guess the next traversal, keeping track of which gadgets have been used, and accept once they all have been. By Savitch's theorem \cite{SAVITCH} universal traversal is in PSPACE.
\end{proof}

\subsection{DAG Gadgets}\label{sec:bounded visit}

In this subsection, we consider universal traversal for $k$-tunnel DAG gadgets. We find that this problem is NP-hard for any DAG gadget which has and actually uses at least 2 tunnels, in the sense defined below. For some simple 1-tunnel DAG gadgets, universal traversal is analogous to finding Eulerian paths and is thus in P; however, more complex 1-tunnel DAG gadgets can not easily be converted to an Eulerian path problem. For example the 1-toggle which switches direction after each transition or a gadget which can be traversed at most twice. We leave the case of 1-tunnel DAG gadgets open.

\begin{problem}
	Is universal traversal with any 1-tunnel DAG gadget in P? Are there 1-tunnel DAG gadgets for which universal traversal is NP-complete?
\end{problem}

Some $k$-tunnel DAG gadgets with $k>1$ act like 1-tunnel gadgets in that it is never possible to make use of multiple tunnels. A simple example is shown in Figure~\ref{fig:not 2-tunnel}. We formalize this notion in the following definition.

\begin{figure}
	\centering
	\includegraphics[width=.4\linewidth]{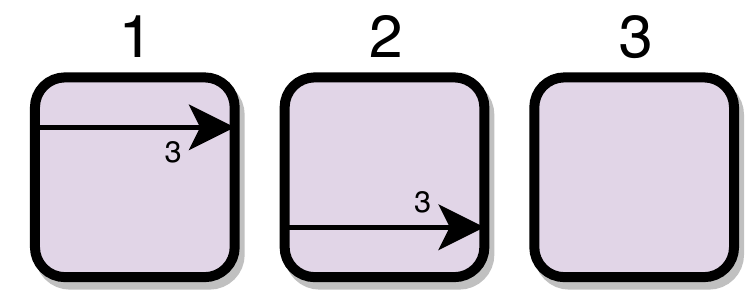}
	\caption{A 2-tunnel DAG gadget which is not true 2-tunnel.}
	\label{fig:not 2-tunnel}
\end{figure}

\begin{definition}
	A state of a $k$-tunnel gadget is \emph{true 2-tunnel} if there are at least two tunnels which are traversable in any state reachable from that state. A gadget is \emph{true 2-tunnel} if it has a true 2-tunnel state.
\end{definition}

Note that a $k$-tunnel gadget does not need multiple tunnels traversable in the same state to be true 2-tunnel: perhaps traversing the single traversable tunnel opens another tunnel. To justify this definition, we prove the following result.

\begin{theorem}
	Let $G$ be a $k$-tunnel which is not true 2-tunnel. Then there is a $1$-tunnel gadget $G^\prime$ and a bijection between states of $G$ to states of $G^\prime$ such that replacing each copy of $G$ in a system of gadgets with a copy of $G^\prime$ in the corresponding state gives an equivalent system of gadgets.
\end{theorem}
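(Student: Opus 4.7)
The plan is to exploit the assumption that $G$ is not true $2$-tunnel to show that only one tunnel of $G$ is ever used in any forward orbit, and then collapse $G$ to a $1$-tunnel gadget that records the behavior of that single ``active'' tunnel.

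For each state $S$ of $G$, let $\mathcal{R}(S)$ be the set of states forward-reachable from $S$ (including $S$) in the state-transition graph of $G$, and let $\mathcal{T}(S)$ be the set of tunnels carrying a transition from some state of $\mathcal{R}(S)$. Since no state of $G$ is true $2$-tunnel, $|\mathcal{T}(S)| \le 1$; when nonempty, write $T(S)$ for its unique element. The containment $\mathcal{R}(S') \subseteq \mathcal{R}(S)$ whenever $S \to S'$ is a $G$-transition yields forward-invariance: along any directed path in the state-transition graph, $T$ is either undefined or equal to a single fixed tunnel of $G$, and in particular the tunnel used by any transition $S \to S'$ is $T(S)$.

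I would then define $G'$ as a $1$-tunnel gadget with locations $x,y$ and with one state $\sigma(S)$ for each state $S$ of $G$, via a bijection $\sigma$. Fix an orientation of every tunnel of $G$ once and for all, so each tunnel $\{a,b\}$ has a designated $a \mapsto x$, $b \mapsto y$. For each $G$-transition $(S,a) \to (S',b)$ add the $G'$-transition $(\sigma(S),x) \to (\sigma(S'),y)$, and symmetrically for reverse-direction transitions; states $\sigma(S)$ for which $T(S)$ is undefined get no transitions. To replace a copy of $G$ with initial state $S_0$ in a system, install a copy of $G'$ in state $\sigma(S_0)$, routing $x$ and $y$ to the same external connection points that $a_{T(S_0)}$ and $b_{T(S_0)}$ used, and leaving the other $2k-2$ external ports dangling (the agent may walk to them but no transition is available). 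Equivalence then follows by bidirectional simulation: by forward-invariance every traversal in the original system uses the copy's active tunnel, these traversals match $G'$-transitions one-to-one via $\sigma$ and the chosen orientation, and dangling ports are harmless. Gadget-state configurations agree via $\sigma$, and the set of reachable external locations and the set of gadgets ever traversed coincide in both systems, giving equivalence for reachability, universal traversal, and reconfiguration alike.

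The main obstacle I expect is purely bookkeeping: verifying that fixing a single orientation per tunnel of $G$ yields a well-defined single gadget $G'$ and a single bijection $\sigma$ that simultaneously work for every copy in every system, rather than needing a per-copy variant. Once forward-invariance of $T$ is established (immediate from $\mathcal{R}(S') \subseteq \mathcal{R}(S)$), this consistency is automatic and the simulation argument itself is routine.
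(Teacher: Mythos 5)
Your proof is correct and takes essentially the same approach as the paper: since no state is true 2-tunnel, at most one tunnel is ever traversable from any given state, this property is forward-invariant along transitions, and states with different active tunnels are mutually unreachable, so all tunnels can be collapsed onto the single tunnel of $G'$ via a state bijection. The only difference is bookkeeping in the replacement step --- the paper merges all $2k$ locations into two, whereas you route only the active tunnel's endpoints and leave the other ports dangling, which if anything spells out the hookup more carefully.
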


We leave ambiguous exactly what `equivalent' means, since what counts as equivalent should be different for different victory conditions. For example, any gadget simulation is sufficient to show hardness for reachability, but this may not suffice for universal traversal because traversing the simulation does not necessarily involve traversing every gadget inside it. In the case at hand, the systems of gadgets are equivalent in that the answers to the reachability and universal traversal problems are all the same, and the proof should extend to other victory conditions one may consider, though not necessarily all of them.

\begin{proof}
	To construct $G^\prime$, we simply collapse the $2k$ locations in $G$ to $2$ locations by merging all of the tunnels. From any state in $G$, there is only one tunnel which can ever be traversable. Thus ignoring all of the other tunnels yields the same gadget. If there are different states in $G$ which have different traversable tunnels, we can move them to the same tunnel since these states are never reachable from each other.
\end{proof}

We will use the fact that every nontrivial DAG gadget simulates either a directed or an undirected single-use path, since we can take a final nontrivial state of the gadget \cite{gadgets2}.

The rest of this subsection is devoted to proving NP-completeness for universal traversal for true 2-tunnel DAG gadgets.

\begin{theorem}\label{thm:visit dag}
	Universal traversal for any true 2-tunnel DAG gadget is NP-complete.
\end{theorem}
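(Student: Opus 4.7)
The plan is to handle membership and hardness separately. For membership in NP: since any DAG gadget can be traversed only a bounded number of times (one transition per edge of its state-transition graph), a system of $n$ gadgets admits at most polynomially many total traversals. We nondeterministically guess a sequence of traversals, simulate it, and accept iff every gadget was used at least once.

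For NP-hardness, the plan is to reduce from Hamiltonian path in some restricted class (for example, directed Hamiltonian path on planar max-degree-3 graphs). We will build each instance out of two ingredients:
\begin{enumerate}
\item \emph{Single-use edges.} The excerpt already notes that every nontrivial DAG gadget simulates a directed or undirected single-use path, because one may look at a final nontrivial state in the state-transition graph. We use copies of the gadget, prestarted in such a final state, to implement the wires of the reduction. Crucially, a single-use path only needs one traversal to satisfy the universal-traversal condition for that copy, so these wires do not interfere with the counting.
\item \emph{Vertex gadgets.} At each vertex of the Hamiltonian path instance we place one copy of the gadget $G$ prestarted in a true 2-tunnel state $S$. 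By definition of true 2-tunnel, there are two distinct tunnels $T_1, T_2$ of $G$ each of which becomes traversable in some state reachable from $S$. Because $G$ is a DAG, any execution starting from $S$ uses a bounded number of transitions, and in every such execution the agent enters and exits $G$ some number of times. The two tunnels of the vertex gadget are wired to two distinct sides of the vertex in the Hamiltonian path instance, and we use the DAG structure plus single-use edges to force that each vertex gadget is entered from outside, traversed in a restricted pattern, and then left.
\end{enumerate}

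The heart of the construction is a ``vertex widget'' that forces: visiting this widget at all requires the agent to come in on one edge incident to the vertex and leave on another, each widget being visited exactly once. Combined with the universal traversal requirement, this enforces that the underlying walk uses every vertex widget, and the wires enforce that consecutive widgets are graph-adjacent, giving a Hamiltonian path. A start gadget is attached to a designated source vertex, and we pad with single-use wires so that an arbitrary walk that visits every gadget corresponds to and only corresponds to a Hamiltonian path.

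\textbf{Where I expect the main difficulty.} The true 2-tunnel condition is very weak: it only guarantees existence of two tunnels that \emph{become} traversable in (possibly different) states reachable from $S$, and does not specify whether $T_1$ and $T_2$ can both be used in the same execution, in what order, or whether using one disables the other. The proof will need a careful case analysis (or a clever uniform construction) on the poset of reachable states from $S$, for instance: either there is a reachable state in which both tunnels are simultaneously traversable, or using one tunnel opens the other, or each tunnel is usable only in a distinct branch of the DAG. In each case we must design the external wiring so that the universal traversal requirement for the vertex copy precisely captures ``enter and exit once.'' Handling the last subcase (mutually exclusive tunnels) is the most delicate, because there we must add auxiliary single-use gadgets to bridge the two disjoint visits to the vertex widget while preventing the agent from cheating by using those bridges in the wrong order; using the DAG bound on the number of visits, together with a crossover-free planar embedding, should make this controllable.
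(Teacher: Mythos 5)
Your NP-membership argument matches the paper's and is fine. For hardness, however, what you have is a plan rather than a proof, and the plan stops exactly where the paper's real work begins. The substance of the paper's argument is a case analysis with explicit vertex-widget constructions and soundness arguments for each case (Lemmas~\ref{lem:visit dag open}--\ref{lem:visit dag no open undir no close}), and your proposal defers all of this to ``a careful case analysis (or a clever uniform construction)'' without carrying it out. Two specific ideas are missing. First, the paper does not work with an arbitrary true 2-tunnel state but with a \emph{final} true 2-tunnel state $S$ (one exists because the state-transition graph is a DAG): after any traversal from $S$, the resulting state is no longer true 2-tunnel, so at most one tunnel of that copy can ever be used again. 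This finality is the lever in every soundness argument (it is what prevents the agent from re-threading a vertex widget on a second visit), and without it your claim that ``each widget is visited exactly once'' has no mechanism enforcing it. Your remark about ``the poset of reachable states'' gestures in this direction but never isolates the property that is actually used.

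Second, your case ``using one tunnel opens the other'' cannot be handled by the Hamiltonian-path scheme you describe, at least not obviously: when a traversal of one tunnel opens a traversal of the other, the agent can re-enter and re-cross a vertex widget, and controlling this is exactly what is delicate. The paper handles this case by an entirely different reduction: if the final true 2-tunnel state has a distant opening, then reachability for the gadget is already NP-complete by prior work, and universal traversal is reduced from reachability by attaching to every gadget a small construction (Figures~\ref{fig:dag open no attachment} and \ref{fig:dag open yes attachment}) that lets the agent return and traverse that gadget only after reaching the goal location. Indeed the paper's open question about planar systems concerns precisely this case, which suggests a direct Hamiltonian-path widget for it is not straightforward. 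Conversely, your ``mutually exclusive tunnels'' subcase does not actually arise for a final true 2-tunnel state with no distant opening: there both tunnels must already be traversable in $S$ itself, and the remaining split is whether some tunnel is directed in $S$, and if not, whether crossing one tunnel forces closing a direction of the other --- each of which requires its own vertex gadget (including the nine-gadget construction of Figure~\ref{fig:spiraltonian} for the no-forced-closing case). Until these constructions and their ``no cheating'' arguments are supplied, the hardness half of your proof is incomplete.
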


To prove Theorem~\ref{thm:visit dag}, we will focus on the \emph{final} true 2-tunnel state of a DAG gadget, and only use the two tunnels which make this state true 2-tunnel. Such a state exists because the state-graph is a DAG. After making a traversal in this state, any resulting state is not true 2-tunnel, so only one of the two tunnels can be traversed in the future. Note that if the gadget is nondeterministic, the agent may be able to choose which of the two tunnels this is. We will consider several cases for the form of the last true 2-tunnel state, and show NP-hardness for each one.

The first case we consider is when the final true 2-tunnel state has a distant opening.

\begin{lemma}\label{lem:visit dag open}
	Let $G$ be a true 2-tunnel gadget and let $S$ be a final true 2-tunnel state of $G$. If any transition from $S$ across one tunnel opens a traversal across another tunnel, then universal traversal for $G$ is NP-hard.
\end{lemma}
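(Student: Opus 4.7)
The plan is to establish NP-completeness in two parts. Membership in NP follows from the DAG hypothesis: each copy of $G$ admits only a bounded number of traversals (at most the number of edges in its transition graph), so any valid universal-traversal path has length polynomial in the system size and serves as a polynomial-time-verifiable certificate. The bulk of the work is NP-hardness, which I would prove by reduction from 3-SAT.

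For the reduction I would use only the distinguished pair of tunnels $A$ and $B$ in the final true 2-tunnel state $S$: by hypothesis, some transition across $A$ in state $S$ is the \emph{only} way to make $B$ traversable, so any universal-traversal solution is forced to perform that particular $A$-transition strictly before traversing $B$. This gives a "forced ordering" primitive: placing a copy of $G$ initialized in $S$ along a path of the system commits the agent to a one-way dependency between two disjoint points of the connection graph. I would then build a variable widget as a small cycle containing two copies of $G$ wired so that the agent, to ever visit the $B$-tunnel of both copies, must pick exactly one of the two possible orientations to traverse the cycle; the two orientations represent the truth value of the variable. Three literal-terminals per variable widget would expose the chosen orientation to the clause widgets. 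Each clause widget is a meet point that is only reachable as the completion of the $B$-traversal of a copy of $G$ whose corresponding $A$-transition is performed by the agent when it commits to a satisfying literal; consequently the clause's $G$-copy is visited iff at least one literal of the clause is set true. Universal traversal of the combined system is then possible iff the formula is satisfiable.

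The main obstacle will be \textbf{gadget noise}: in general $G$ has many more states, locations, tunnels, and transitions than the pair $(A,B)$ in state $S$, and the reduction must rule out unintended traversal orders that could visit every copy of $G$ without respecting the variable/clause logic. I would handle this in two steps. First, I would initialize every copy of $G$ that is used as a logic component to state $S$ and connect only its $A$- and $B$-locations into the rest of the system; the remaining locations of $G$ would be attached to a short "cap" path built from a single-use directed or undirected edge simulated by $G$ itself (using the fact, cited just before the lemma, that every nontrivial DAG gadget simulates such a path). The caps are sinks that absorb any spurious exit without granting a shortcut. Second, I would add, for each unused tunnel or auxiliary copy of $G$ introduced to build the caps, a mandatory side-trip along the main path so that those copies are also traversed automatically in any successful run.

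Finally, I would argue the reduction's correctness in both directions: a satisfying assignment yields an explicit order of traversals that visits every copy of $G$, and conversely any universal traversal projects to a satisfying assignment because the forced ordering on each variable widget is consistent (the agent cannot "undo" an $A$-transition in a DAG gadget) and the clause widgets' $G$-copies are only visitable through a literal whose corresponding variable was set to make it true. Combined with Lemma~\ref{lem:visiting in pspace} and the NP upper bound above, this establishes NP-completeness.
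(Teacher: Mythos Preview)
Your approach diverges substantially from the paper's, and as written it has a real gap.

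The paper does not build a 3-SAT reduction at all. It observes that a gadget with a distant opening already has NP-hard \emph{reachability} by \cite{gadgets2}, and then reduces reachability for $G$ to universal traversal for $G$: starting from a reachability instance, it hangs a small ``return'' construction (built from copies of $G$ in state $S$, using the opening transition) off every gadget and off the win location, so that after reaching the win location the agent can go back and touch every gadget it skipped, while before reaching the win location none of the added gadgets can be entered. This is short and robust to all the noise you worry about, because it never needs to analyse how $G$ behaves beyond the single opening transition.

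Your plan, by contrast, rests on the claim that ``some transition across $A$ in state $S$ is the \emph{only} way to make $B$ traversable.'' The hypothesis does not give you that. It only says that one particular transition across $A$ opens one particular direction of $B$; it does not rule out that (i) $B$ is already traversable in the other direction in $S$ (the paper explicitly splits into this case in Figure~\ref{fig:visit dag open}), (ii) the reverse traversal of $A$, or a traversal of $B$ in its open direction, also leads to a state where the ``closed'' direction of $B$ becomes available, or (iii) nondeterministic transitions across $A$ land in states with different traversability of $B$. Any of these breaks the clean forced-ordering primitive your variable and clause widgets depend on. The capping idea does not fix this, because the leaks are through the very $A$- and $B$-locations you must expose. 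Until you either prove the uniqueness you need from the ``final true 2-tunnel'' assumption (which you cannot in general) or redesign the widgets to tolerate all of (i)--(iii), the reduction does not go through.

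A minor point: the lemma asserts only NP-hardness, and its statement does not assume $G$ is a DAG gadget, so your NP upper-bound paragraph is both unnecessary here and not justified by the stated hypotheses.
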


\begin{proof}
	We will only use the two tunnels involved in the opening transition from $S$ to $S^\prime$. Suppose traversing the top tunnel from left to right allows the agent to open the left-to-right traversal on the bottom tunnel. Then state $S$ has one of the two forms shown in Figure~\ref{fig:visit dag open}, depending on whether the bottom tunnel can be traversed right to left in $S$. In either case, the top tunnel may or may not be traversable from right to left in $S$. Since $S$ is a final true 2-tunnel state, only the bottom tunnel is traversable in $S^\prime$.

	\begin{figure}
		\centering
		\begin{subfigure}{.3\linewidth}
			\centering
			\includegraphics[width=\linewidth]{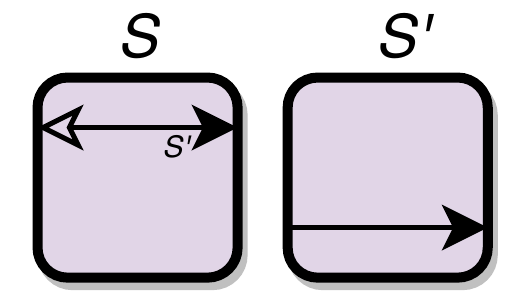}
			\caption{}
			\label{subfig:dag open no}
		\end{subfigure}
		\hfil
		\begin{subfigure}{.3\linewidth}
			\centering
			\includegraphics[width=\linewidth]{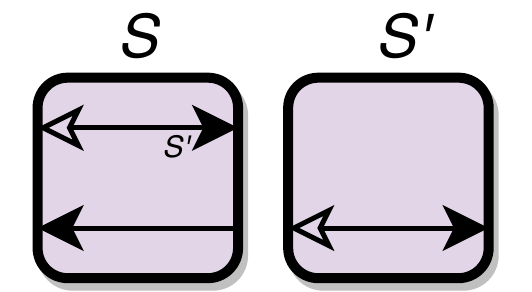}
			\caption{}
			\label{subfig:dag open yes}
		\end{subfigure}
		\caption{Two cases for the form of the gadget in Lemma~\ref{lem:visit dag open}, assuming traversing the top tunnel to the right opens the bottom tunnel to the right. In (a) the bottom tunnel is not traversable to the left in state $S$ and in (b) it is. Unfilled arrows are traversals that may or may not exist depending on the gadget. Unlabled transitions may be to arbitrary states not specified here.}
		\label{fig:visit dag open}
	\end{figure}

	To show NP-hardness of universal traversal, we reduce from the reachability problem for the same gadget. Since the gadget has a distant opening, reachability is NP-complete \cite{gadgets2}. We modify the system of gadgets in an instance of the reachability problem by adding a construction to each gadget which allows the agent to go back and make a traversal in it after reaching the win location. If the agent can reach the win location, it can then use any gadgets it did not already use, and if it cannot reach the win location, it cannot use the gadgets involved in this construction.

	\begin{figure}
		\centering
		\includegraphics[width=.4\linewidth]{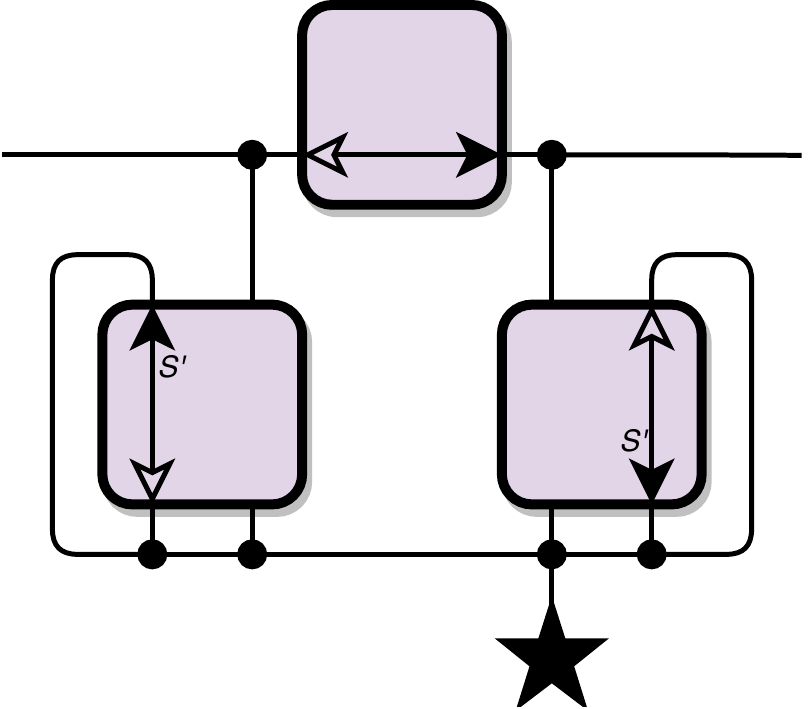}
		\caption{The construction to allow the agent to use a gadget after reaching the win location, when the bottom tunnel is not traversable in state $S$ (the case of Figure~\ref{subfig:dag open no}). The star denotes the goal location.}
		\label{fig:dag open no attachment}
	\end{figure}

	The construction is slightly different depending on whether the bottom tunnel can be traversed from right to left in state $S$. We use the construction in either Figure~\ref{fig:dag open no attachment} or Figure~\ref{fig:dag open yes attachment}. In either case, the agent cannot use the newly added gadgets until it first reaches the win location. Once it reaches the win location, it can open tunnels in the added gadgets, traverse the (top) gadget the construction is attached to, and return. If the agent already used the gadget this is attached to, it can instead use a traversal in each added gadget without visiting that gadget. So it is possible to make a traversal in every gadget if and only if the original reachability problem is solvable.
\end{proof}

	\begin{figure}
		\centering
		\includegraphics[width=.4\linewidth]{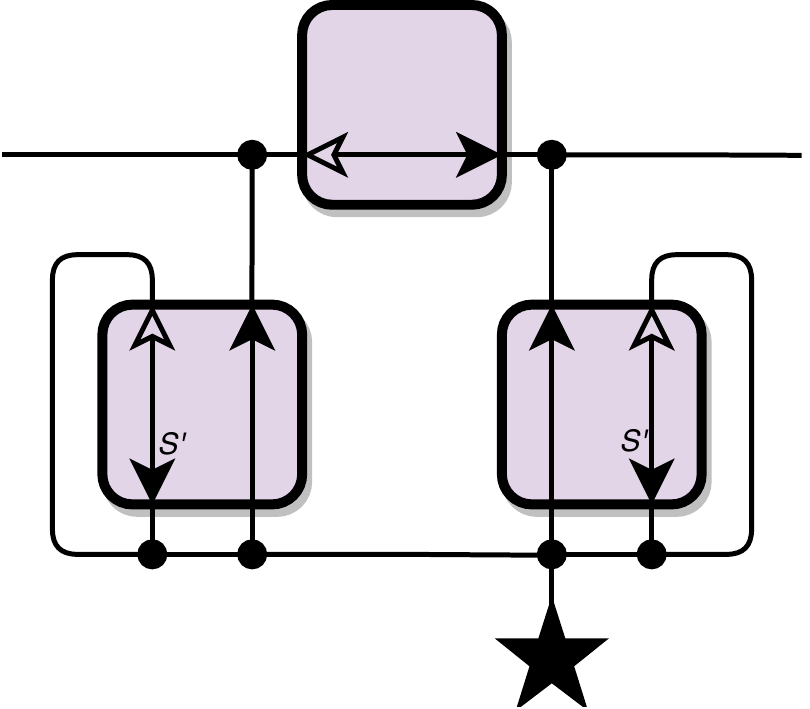}
		\caption{The construction to allow the agent to use a gadget after reaching the win location, when the bottom tunnel is traversable from right to left in state $S$ (the case of Figure~\ref{subfig:dag open yes}).}
		\label{fig:dag open yes attachment}
	\end{figure}

Now we will assume the final true 2-tunnel state has no distant opening. If only one tunnel is traversable in this state, then it cannot be true 2-tunnel because the other tunnel will never become traversable. So both tunnels are traversable, and after making any traversal, only one tunnel will ever be traversable. With no distant opening, we first consider the case where at least one of the tunnels is directed in the final true 2-tunnel state.

\begin{lemma}\label{lem:visit dag no open dir}
	Let $G$ be a true 2-tunnel gadget and let $S$ be a final true 2-tunnel state of $G$. Suppose no transition from $S$ across one tunnel opens a traversal across the other tunnel. If, in $S$, some tunnel can be traversed in one direction but not in the other, then universal traversal for $G$ is NP-hard.
\end{lemma}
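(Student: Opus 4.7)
The plan is to prove NP-hardness by a direct reduction from directed Hamiltonian path on bounded-degree graphs, exploiting the directional asymmetry of $S$ together with the fact that $S$ is a \emph{final} true 2-tunnel state. Since ``no distant opening'' places the gadget in a regime where reachability is in P (by the characterization of DAG $k$-tunnel gadgets in \cite{gadgets2}), the reduction-from-reachability strategy of Lemma~\ref{lem:visit dag open} is unavailable, and we must build the hardness from scratch.

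First, I would record a structural observation: because $S$ is true 2-tunnel and no traversal from $S$ can open a tunnel, both tunnels of $S$ must already be traversable in $S$---otherwise a non-traversable tunnel in $S$ could never become traversable, contradicting the true 2-tunnel condition on reachable states. Assume without loss of generality that the top tunnel is directed $L\to R$ in $S$, and split into three sub-cases on the bottom tunnel: undirected, directed $L\to R$, or directed $R\to L$. Next, I would establish a ``choose-and-commit'' property: any traversal out of $S$ lands in a non-true-2-tunnel state, from which the gadget can reach a state in which only one of the two tunnels is ever traversable. Thus in state $S$ the gadget behaves essentially like a constant-size ``pick one of two tunnels, then lock'' gadget, with one tunnel directed.

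Then I would construct the reduction. Each vertex of the input graph becomes a copy of $G$ initialized in $S$; the directed top tunnel represents one-way passage through the vertex, and the bottom tunnel is wired to provide an alternative visit route when needed. The input graph's directed edges are realized as single-use directed paths between vertex-gadgets, built from additional copies of $G$ using the fact, noted above, that every nontrivial DAG gadget simulates a directed single-use path. The universal-traversal requirement then forces at least one traversal through each vertex-gadget, and the commit property prevents revisiting a vertex-gadget in any way that would cheat the Hamiltonian-path interpretation, so universal traversal of the constructed system is equivalent to the existence of a directed Hamiltonian path in the input graph.

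The main obstacle is the undirected-bottom sub-case, where the agent has extra freedom to traverse the bottom tunnel in either direction and could try to exploit this for revisiting. I plan to handle this by wiring the bottom-tunnel endpoints so that any back-and-forth still amounts to a single commitment from the gadget's perspective, possibly with additional ``terminator'' single-use paths that prevent re-entry from the other side. A second subtle point is that ``not true 2-tunnel'' does not directly mean ``only one tunnel currently traversable''---it only guarantees that such a state is reachable. Because the state-transition graph is a finite DAG, the number of additional traversals the agent can extract before being forced into a one-tunnel state is bounded by a constant, which can be absorbed as constant overhead per vertex-gadget without breaking the universal-traversal accounting.
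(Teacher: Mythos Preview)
Your plan has a genuine gap that breaks the reduction. You propose to realize the directed edges of the input graph as single-use directed paths, each built from copies of $G$. But universal traversal demands that the agent traverse \emph{every} gadget in the system, including all of those edge gadgets. A Hamiltonian $s$--$t$ path in a 3-regular digraph uses only $n-1$ of the $\tfrac{3n}{2}$ edges, so even when a Hamiltonian path exists, following it would leave many edge-gadgets untraversed and the universal-traversal instance unsolved. Conversely, if you try to go back afterward and mop up the unused edge-gadgets, the vertex copies of $G$ are already past $S$ and no longer true 2-tunnel, so you cannot navigate freely through them. Either way the ``universal traversal $\Leftrightarrow$ Hamiltonian path'' equivalence fails.

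A secondary issue is that a single copy of $G$ in state $S$ has only two tunnels, so it cannot by itself serve as a degree-3 vertex; your sketch does not explain where the third incident edge attaches or how the choose-and-commit property routes three paths through two tunnels.

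The paper avoids both problems by placing \emph{all} gadgets inside vertex gadgets and leaving the inter-vertex edges as free connection-graph links (which carry no universal-traversal obligation). Each vertex gadget contains one single-use directed path (the ``must-visit'' witness) together with one or two copies of $G$ in state $S$ whose tunnels implement the fan-in or fan-out; the directed tunnel in $S$ is oriented so that entering on one in-edge cannot be followed by exiting on the other in-edge (this is exactly where the hypothesis ``no transition from $S$ opens the other tunnel'' is used). The construction is arranged so that passing through a vertex gadget once already traverses every gadget it contains, and the final-true-2-tunnel property prevents a second useful pass. Your case split on the bottom tunnel's orientation and your worry about residual traversals after leaving $S$ are then unnecessary: the single-use path inside the vertex gadget, not any edge gadget, is what forces exactly one visit per vertex.
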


\begin{proof}
	A directed tunnel with a single-use path on each side is a single-use directed path; since $G$ has a directed tunnel in state $S$, it simulates a single-use directed path.

	We reduce from finding a Hamiltonian path in a directed 3-regular graph with specified start and end vertices $s$ and $t$. Each vertex of the graph other than $s$ and $t$ becomes one of the vertex gadgets in Figure~\ref{fig:visit dag dir}, depending on its in-degree. We replace $s$ with the right half of the appropriate vertex gadget and $t$ with the left half. The agent begins at $s$.

	\begin{figure}
		\centering
		\includegraphics{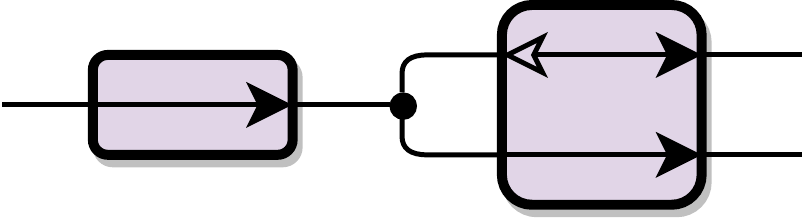}

		\vspace{.5cm}

		\includegraphics{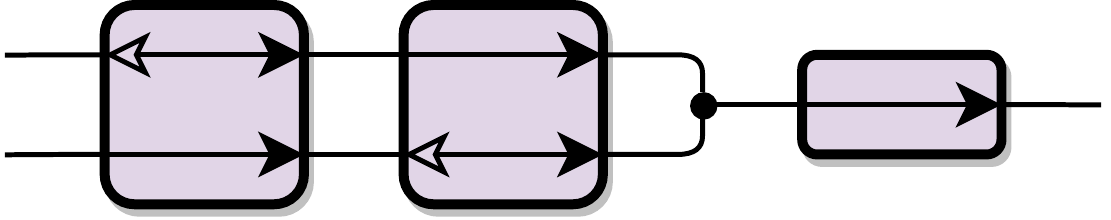}
		\caption{Vertex gadgets for Lemma~\ref{lem:visit dag no open dir}. The first construction is for vertices with in-degree 1 and out-degree 2, and the second construction is for vertices with in-degree 2 and out-degree 1. Each construction contains one or two copies of $G$ in state $S$ and one single-use directed path. By assumption, state $S$ contains two traversable tunnels, at least one of which is directed. If both tunnels are directed, we only need one of the gadgets in state $S$ for the in-degree 2 vertex gadget.}
		\label{fig:visit dag dir}
	\end{figure}

	If there is a Hamiltonian path, the agent can follow it and thereby make a traversal in every gadget by going through every vertex gadget. Suppose the universal traversal problem is solvable. The agent must use the single-use directed path in each vertex gadget, and thus must go through every vertex. Suppose it enters a vertex gadget with in-degree 2 along the top path, and reaches the vertex in the center. Since, by assumption, making a traversal across a tunnel in $S$ cannot open a traversal on the other tunnel, the bottom tunnel of the left gadget still is not traversable to the left, so the agent cannot exit on the bottom path. Similarly it cannot enter on the bottom path and exit on the top path. Next, the agent cannot enter a vertex gadget with in-degree 1 for the first time on either path on the right, since this would require exiting another vertex gadget to the left on a path it has not used before. If the agent exits a vertex gadget with in-degree 1 on the left, that copy of $G$ is now not true 2-tunnel, so the agent cannot later enter and exit on different paths on the right. Summarizing, the agent always enters vertex gadgets on the left and exits on the right, and it cannot use all three entrances or exits of a vertex gadget. Thus the agent's path corresponds to a path in the graph, and since it must use each single-use directed path this path is Hamiltonian.
\end{proof}

The remaining case is when, in the final true 2-tunnel state, there is no distant opening and all tunnels are undirected.  We branch into two cases one last time, based on whether traversing one tunnel requires closing the other tunnel.

\begin{lemma}\label{lem:visit dag no open undir close}
	Let $G$ be a true 2-tunnel gadget and let $S$ be a final true 2-tunnel state of $G$. Suppose there are two tunnels $a$ and $b$ which can both be traversed in both directions in $S$. Furthermore, suppose that every transition from $S$ across $a$ from left to right goes to a state in which $b$ cannot be traversed from right to left. Then universal traversal for $G$ is NP-hard.
\end{lemma}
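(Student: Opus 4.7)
The plan is to reduce from directed Hamiltonian path on 3-regular graphs, much as in Lemma~\ref{lem:visit dag no open dir}. Since $G$ is a nontrivial DAG gadget, the paragraph before Theorem~\ref{thm:visit dag} tells us we may freely use single-use (directed or undirected) paths as a building block, alongside copies of $G$ initialized in state $S$. The new tool in this case is the locking relation between $a$ and $b$: traversing $a$ from left to right in state $S$ permanently closes the right-to-left traversal of $b$. Since $S$ is the \emph{final} true 2-tunnel state and there is no distant opening, once the agent makes any traversal from $S$ at most one of $a,b$ remains usable, and in particular the only way to make a traversal in both tunnels of a given copy of $G$ is to first cross one in $S$ and then cross the other in the successor state.

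The first step is to build a vertex gadget out of one (or, if necessary, two) copies of $G$ in state $S$ together with single-use paths. The idea is to attach the left endpoints of tunnels $a$ and $b$ to the incoming edges of the vertex through single-use paths, and the right endpoints to the outgoing edges, choosing the wiring so that the only way to make a traversal in both $a$ and $b$ is to enter from a single in-path and depart on a single out-path. The locking relation is what makes this work: if the agent entered on one in-path, crossed $a$ left-to-right, and then tried to return through $b$ right-to-left in order to leave by a second in-path, the hypothesis forbids this transition. A mirrored pair of copies of $G$ handles the case where the symmetric locking (from $b$ to $a$) is not guaranteed by the hypothesis; the precise count of copies tracks the in- and out-degree of the vertex, just as in Figure~\ref{fig:visit dag dir}.

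Next I would argue correctness. The forward direction is immediate: a Hamiltonian path in the graph translates directly into a route for the agent that enters each vertex gadget from its predecessor and leaves toward its successor, using every single-use path and traversing both tunnels of every copy of $G$. For the reverse direction, since each edge of the graph is represented by a single-use path, any universal traversal must cross every single-use path, hence must visit every vertex gadget. Inside a vertex gadget, the locking property combined with the DAG condition on $S$ means the agent cannot "bounce" between two in-paths or two out-paths; it is forced to enter on one in-path and exit on one out-path while consuming both tunnels of the $G$-copies inside. Thus the agent's route projects to a walk in the graph that visits every vertex and uses every intended edge at most once, i.e.\ a Hamiltonian path. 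Together with containment in NP from Lemma~\ref{lem:visiting in pspace} (the witness being the sequence of traversals, polynomial in size because $G$ is DAG), this yields NP-hardness.

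The main obstacle will be calibrating the vertex gadget so that a \emph{single} directional locking hypothesis is strong enough to prevent all unintended routes. In particular, without symmetric locking the agent might, in a single copy of $G$, cross $b$ left-to-right first and then $a$ right-to-left, which is not explicitly forbidden; I expect to kill this possibility by orienting the attached single-use paths so that any such traversal would force the agent to re-enter a single-use path in the wrong direction, or by using two mirrored copies of $G$ whose combined locking blocks both diagonals. Once the vertex gadget is verified to admit only the two intended "straight through" traversals, the rest of the reduction proceeds in parallel with Lemma~\ref{lem:visit dag no open dir}.
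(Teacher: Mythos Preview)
Your high-level strategy matches the paper's: reduce from directed Hamiltonian path in 3-regular graphs, build vertex gadgets from copies of $G$ in state $S$ plus single-use paths, and use the $a\to b$ locking to kill U-turns (with two mirrored copies of $G$ to symmetrize the one-sided hypothesis, just as you anticipate). The gap is in where you put the single-use paths. You say that ``each edge of the graph is represented by a single-use path'' and that following a Hamiltonian path results in ``using every single-use path.'' But a Hamiltonian path in a 3-regular $n$-vertex graph uses only $n-1$ of the $3n/2$ edges, so about a third of your single-use paths are never touched and the forward direction fails outright. Symmetrically, your reverse-direction claim that crossing every single-use path forces visiting every vertex is too strong: if single-use paths live on the edges, crossing them all is an Eulerian condition, not a Hamiltonian one, and you would be reducing from a problem in P.

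The fix (and what the paper does) is to place exactly \emph{one} single-use path per vertex gadget, on the degree-$1$ side, and none on the edges. That path is the token forcing the agent to visit that vertex. The one or two copies of $G$ sit on the degree-$2$ side and supply the branching; the locking hypothesis then rules out entering on one in-edge and leaving on the other. Note also that universal traversal only asks for \emph{one} traversal per gadget, not that both tunnels of each copy of $G$ be crossed, so your framing ``the only way to make a traversal in both $a$ and $b$'' is aiming at the wrong target; the unused tunnel of a $G$-copy is harmless. (Minor: Lemma~\ref{lem:visiting in pspace} gives PSPACE, not NP; NP containment comes from the DAG bound on path length, and in any case the present lemma claims only hardness.)
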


\begin{proof}
	The form of state $S$ is shown in Figure~\ref{fig:dag undir}. We reduce from finding a Hamiltonian path in a directed 3-regular graph with specified start and end vertices $s$ and $t$, similarly to Lemma~\ref{lem:visit dag no open dir}. We replace each vertex other than $s$ and $t$ with the appropriate vertex gadget in Figure~\ref{fig:visit dag undir}, and replace $s$ and $t$ with the appropriate half of one of these vertex gadgets. If there is a Hamiltonian path, the agent can follow it to make a traversal in every gadget.

	\begin{figure}
		\centering
		\includegraphics[width=.3\linewidth]{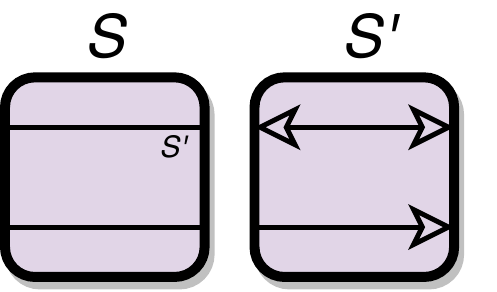}
		\caption{The form of the gadget in Lemma~\ref{lem:visit dag no open undir close}. Every transition from state $S$ across the top tunnel to the right closes the right-to-left traversal on the bottom tunnel.}
		\label{fig:dag undir}
	\end{figure}

	Suppose the agent can make a traversal in every gadget; we consider how it moves through each vertex gadget. It must go across the single-use path in each vertex gadget. Suppose the agent enters a vertex gadget with in-degree 1 on the single-use path. It must exit on a path on the right. If it returns to the vertex gadget along a path on the right, it cannot leave on the other path since at this point that copy of $G$ is not true 2-tunnel; so the agent cannot accomplish anything by returning to the vertex gadget. Now suppose it enters a vertex gadget with in-degree 2 along a path on the left. Because every transition from $S$ crossing $a$ to the right closes the right-to-left traversal of $b$, the agent cannot exit the vertex gadget across the other left path. It can return to where it was, or exit across the single-use path.

	\begin{figure}
		\centering
		\includegraphics{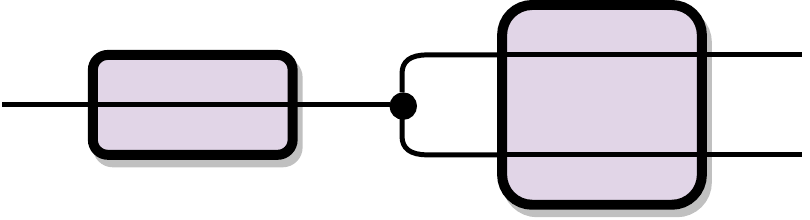}

		\vspace{.5cm}

		\includegraphics{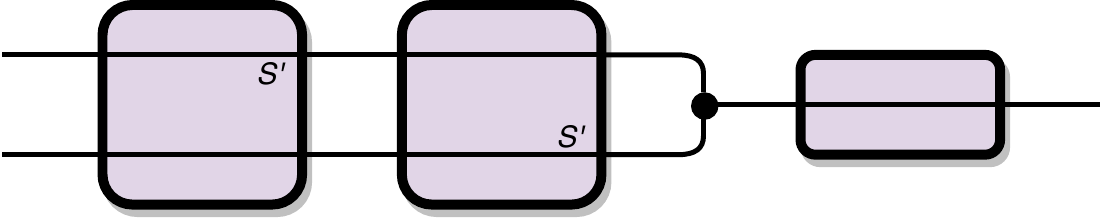}
		\caption{Vertex gadgets for Lemma~\ref{lem:visit dag no open undir close}. The first construction is for vertices with in-degree 1 and out-degree 2, and the second construction is for vertices with in-degree 2 and out-degree 1. Each construction contains one or two copies of $G$ in state $S$ and one single-use path.}
		\label{fig:visit dag undir}
	\end{figure}

	In particular, by induction the agent must always enter a vertex gadget on one of the in-edges and exit on an out-edge. It cannot use more than two edges on each vertex gadget, and must use the single-use path. So its path corresponds to a Hamiltonian path from $s$ to $t$.
\end{proof}

\begin{lemma}\label{lem:visit dag no open undir no close}
	Let $G$ be a true 2-tunnel gadget and let $S$ be a final true 2-tunnel state of $G$. Suppose there are two tunnels $a$ and $b$ which can both be traversed in both directions in $S$. Furthermore, suppose that both traversals from state $S$ across $a$ can leave either direction across $b$ traversable, and vice-versa. The universal traversal for $G$ is NP-hard.
\end{lemma}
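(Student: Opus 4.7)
The plan is to reduce from directed Hamiltonian path on 3-regular graphs, following the template of Lemmas~\ref{lem:visit dag no open dir} and~\ref{lem:visit dag no open undir close}. First I would unpack the hypothesis on $S$: both tunnels $a,b$ are undirected in $S$, no traversal from $S$ is a distant opening, and for each direction of traversal across either tunnel the agent can select a resulting state in which the other tunnel is still traversable in both directions. Because $S$ is the final true 2-tunnel state, every traversal from $S$ leads to a state that is not true 2-tunnel, so only one of $\{a,b\}$ is ever usable thereafter; by the hypothesis the agent can always arrange for this surviving tunnel to be the one it has not yet used. The net effect is that from state $S$ the agent can traverse each of $a$ and $b$ at most once (in either order and either direction), with the first tunnel it uses being permanently ``burned''.

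For the construction, I would replace each vertex $v$ by a vertex gadget built from one copy of $G$ in state $S$ (if $v$ has in-degree 1 and out-degree 2) or two copies of $G$ in state $S$ (if $v$ has in-degree 2 and out-degree 1), together with a single-use path, which $G$ simulates as a final nontrivial state of its DAG. The single-use path acts as the ``waist'' of the vertex gadget: it must be crossed to transit between in-side and out-side, and the two tunnels of a copy of $G$ provide the branching between the waist and the two out-edges (or in-edges). The start and end vertices are replaced by the appropriate halves, as in the earlier lemmas. The forward direction is then routine: given a Hamiltonian path, the agent traces it through the vertex gadgets, visiting each single-use path and at least one tunnel of each copy of $G$ along the way.

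The main obstacle is the reverse direction: unlike in Lemma~\ref{lem:visit dag no open undir close}, no closing hypothesis forbids the agent from entering one in-edge of a vertex gadget and exiting another in-edge via the two tunnels of a single copy of $G$. I plan to rule this out by observing that any such ``diagonal'' traversal bypasses the single-use path; the only way to visit the single-use path afterwards would be to re-enter the vertex gadget on its in-side and cross the waist to the out-side, but by then both tunnels of $G$ in that vertex gadget have been burned (using the single-traversal bound from the first paragraph and the DAG structure), so the agent is trapped at the waist with no remaining exit. A case analysis on the possible entry/exit patterns, combined with the requirement that every single-use path and every copy of $G$ receive at least one traversal, then forces any universal-traversal walk to enter each vertex gadget on an in-edge, cross the waist exactly once, and exit on an out-edge. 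Projecting this walk onto the input graph yields a Hamiltonian path, completing the reduction.
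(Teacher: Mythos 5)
Your reduction diverges from the paper's, and the step you rely on to rescue it does not hold. The paper does \emph{not} reuse the small vertex gadgets of Lemmas~\ref{lem:visit dag no open dir} and~\ref{lem:visit dag no open undir close} for this case: it reduces from \emph{undirected} Hamiltonian path with degree-one endpoints $s,t$ and replaces each vertex by the nine-copy ``spiral'' construction of Figure~\ref{fig:spiraltonian}, precisely because, absent a forced closing, the one-copy-plus-waist vertex gadget cannot prevent turn-arounds. Your trap argument rests on the claim that from state $S$ each of $a$ and $b$ can be traversed at most once, so that after a diagonal traversal both tunnels are ``burned.'' The hypotheses do not give you this: since $S$ is a final true 2-tunnel state, after the first traversal at most one of the two tunnels is ever traversable again, but that surviving tunnel may remain traversable (boundedly) many more times, in both directions. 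So after entering a 2-in-1-out vertex gadget on one in-edge and exiting on the other (which the no-closing hypothesis explicitly allows), the agent may later re-enter on the in-edge whose tunnel is still alive, reach the waist, cross the single-use path, and leave through the out-edge. Its walk then uses three edges at that vertex, so ``universal traversal solvable $\Rightarrow$ Hamiltonian path exists'' is not established; soundness must hold for \emph{every} gadget satisfying the hypotheses, including ones whose surviving tunnel stays open for many uses.

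Even for gadgets where both tunnels do die after two uses, the argument is incomplete in two further ways. First, being trapped is not a contradiction in universal traversal: since all of $S$'s tunnels are undirected, $G$ may well simulate only an \emph{undirected} single-use path, and the agent can save that vertex for last, enter from the out-side, cross the waist backwards as its final move, and halt with every gadget visited. Second, the earlier lemmas enforce in-to-out directionality of vertex traversals via the directed tunnel (Lemma~\ref{lem:visit dag no open dir}) or the forced closing (Lemma~\ref{lem:visit dag no open undir close}); with everything undirected and no closing, your reduction from \emph{directed} Hamiltonian path has no mechanism preventing backward passes through vertex gadgets, and you give none. These are exactly the obstacles the paper's spiral gadget is built to overcome: there, turning around inside a vertex gadget permanently cuts off one of the three innermost copies of $G$, so any successful universal traversal must pass straight through each vertex gadget exactly once, which is what yields the Hamiltonian path.
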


\begin{proof}
	We reduce from finding a Hamiltonian path in an undirected 3-regular graph with specified start and end vertices $s$ and $t$, assuming $s$ and $t$ have degree $1$ (so the graph is not quite 3-regular). We will only use state $S$ and the tunnels $a$ and $b$. Each vertex of the graph other than $s$ and $t$ is replaced with the construction in Figure~\ref{fig:spiraltonian}, where each of the nine gadgets is in state $S$ and the tunnels involved are $a$ and $b$. The start location is at $s$. There is a single-use path leading to $t$; this forces the agent to end by entering $t$.

	\begin{figure}
		\centering
		\includegraphics[width=.7\linewidth]{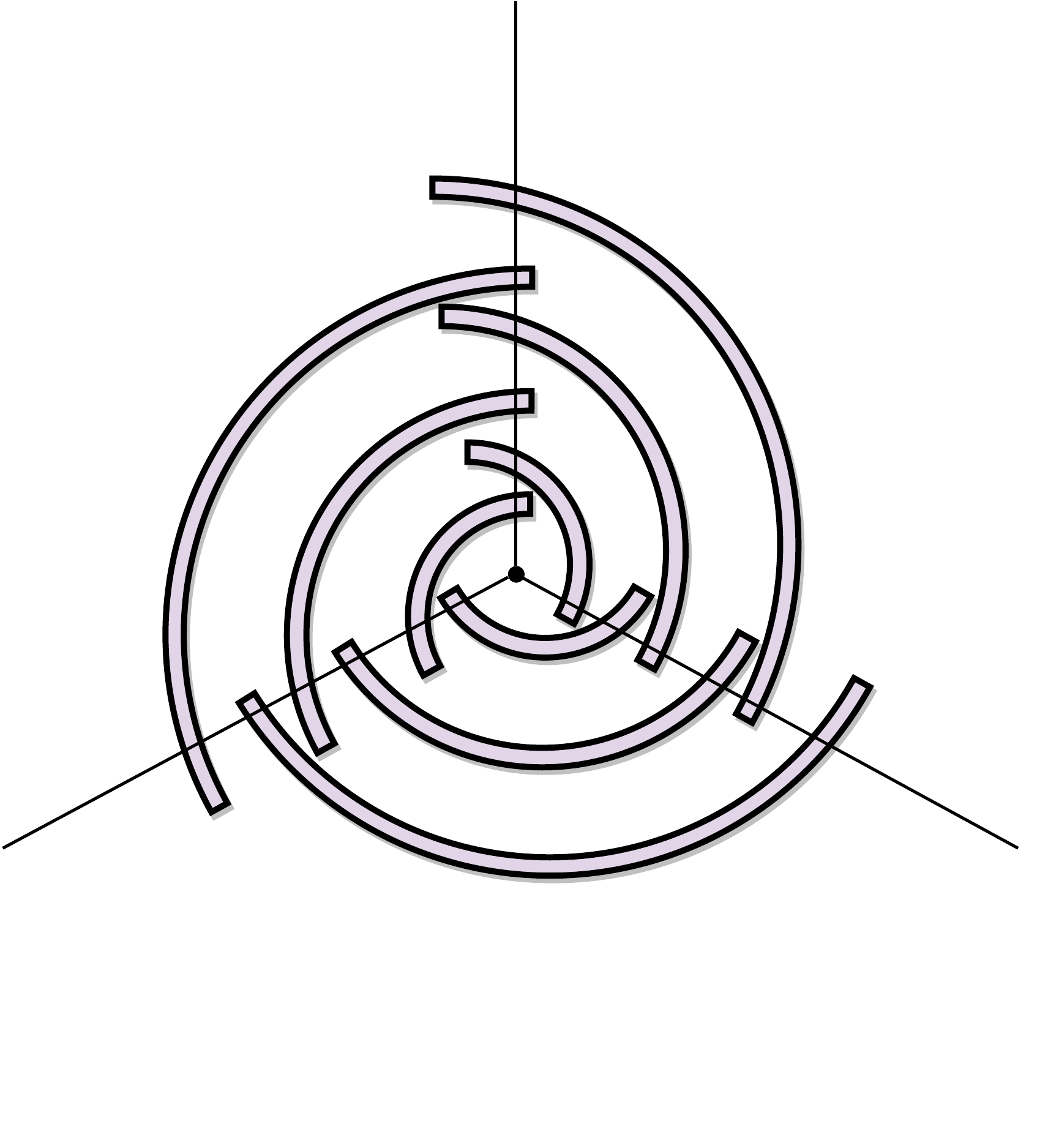}
		\caption{A vertex gadget for Lemma~\ref{lem:visit dag no open undir no close}. The individual gadgets are true 2-tunnel gadgets and are shown elongated in this diagram.}
		\label{fig:spiraltonian}
	\end{figure}

	Suppose there is a Hamiltonian path. Then the agent will follow the path, and thereby traverse every gadget. At each vertex, by assumption all four traversals across each gadget are currently open. As the agent moves towards the center of the vertex gadget, it will choose transitions so that the path out another line stays open; this is possible by assumption. So it is able to follow the Hamiltonian path. Traversing a vertex in this way goes through every gadget in it, so since the path is Hamiltonian the agent uses all of these gadgets. Since the path ends at $t$, it also uses the single-use path to $t$.

	Conversely, suppose the agent can make a traversal in every gadget. We consider ways it can go through each vertex gadget. Suppose it enters and leaves on different paths; there are three gadgets it uses both tunnels on. After the first traversal through one of these gadgets, the other tunnel must be open, so the first tunnel used is permanently closed since (by assumption) $S$ is a final true 2-tunnel state. In particular, the agent cannot use the path it entered on again later. It also cannot later enter on one of the other two paths and leave on the other one, since that would require using both tunnels on the gadgets which intersect both paths after the gadget was already used when the agent exited earlier. So the agent cannot go through the vertex gadget (meaning it enters and leaves on different paths) more than once.

	In order to use one of the innermost gadgets in the vertex gadget, the agent must reach the vertex in the center. Suppose that after doing so for the first time, it exits along the same path it entered. Then, again since $S$ is a final true 2-tunnel state, the tunnel not on that path of each gadget which has a tunnel on that path is permanently closed. So the middle gadget opposite the path used is entirely cut off from the rest of the graph, and can never be used. Similarly the agent cannot have previously traversed that gadget by partially entering the vertex gadget without reaching the vertex in the center. So it must go through the vertex gadget at least once.

	Since the agent must go through each vertex gadget exactly once, starts at $s$, and must end at $t$ because of the single-use path which it must also use, its path is a Hamiltonian path from $s$ to $t$ in the original graph. Note that it could go through a vertex gadget, then later enter on the other path, but all it can then do is exit on that path so this does not accomplish anything.

	In summary, the universal traversal problem is solvable if and only if a Hamiltonian path exists. Since this variant of Hamiltonian path is NP-hard, so is the universal traversal problem.
\end{proof}

These cases together cover every true 2-tunnel DAG gadget, so we can now prove Theorem~\ref{thm:visit dag}, that universal traversal for any true 2-tunnel DAG gadget is NP-complete.

\begin{proof}
	Since the gadget is a DAG, the agent can make a bounded number of traversals in each copy of the gadget. So the solution path has polynomial length, and thus the problem is in NP.

	For NP-hardness, we consider a final true 2-tunnel state $S$ and use one of the preceding lemmas. If a transition from $S$ across some tunnel opens a traversal across a different tunnel, NP-hardness follows from Lemma~\ref{lem:visit dag open}. Otherwise, if $S$ contains a directed tunnel, we have NP-hardness from Lemma~\ref{fig:visit dag dir}. Otherwise, all tunnels traversable in $S$ are traversable in both directions, and no transition from $S$ opens a tunnel. If traversing some tunnel in some direction from $S$ forces the agent to close some traversal across another tunnel, NP-hardness follows from Lemma~\ref{lem:visit dag no open undir close}. Finally, if there is no traversal with that property, Lemma~\ref{lem:visit dag no open undir no close} gives NP-hardness. Together these lemmas cover all true 2-tunnel DAG gadgets.
\end{proof}

For a DAG gadget, universal traversal and reachability can have different complexity. Reachability is NP-hard if and only if the gadget has a distant opening or a forced distant closing \cite{gadgets2}. Each of these properties implies that the gadget is true 2-tunnel, so universal traversal is NP-hard whenever reachability is. However, sometimes reachability is in P while universal traversal is NP-hard. For example, for the gadget shown in Figure~\ref{fig:visiting harder}, NP-hardness of universal traversal is given by Lemma~\ref{lem:visit dag no open dir}, whereas reachability is in NL since there are not interacting tunnels.

\begin{figure}
	\centering
	\includegraphics[width=.7\linewidth]{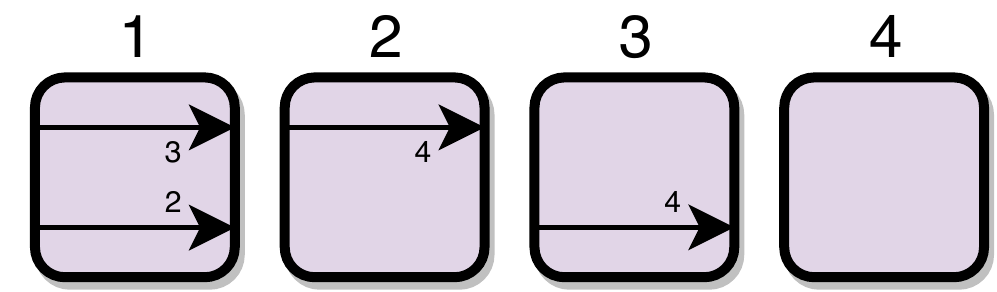}
	\caption{A DAG gadget for which reachability is in NL but universal traversal is NP-hard. Crossing either directed tunnel closes that tunnel without affecting the other tunnel.}
	\label{fig:visiting harder}
\end{figure}

More generally, the gadgets considered in Lemmas~\ref{lem:visit dag open} and \ref{lem:visit dag no open undir close} all have NP-hard reachability as well as universal traversal, but those considered in Lemmas~\ref{lem:visit dag no open dir} and \ref{lem:visit dag no open undir no close} do not necessarily have NP-hard reachability despite universal traversal being NP-hard.

The proofs of Lemmas~\ref{lem:visit dag no open dir}, \ref{lem:visit dag no open undir close}, and \ref{lem:visit dag no open undir no close} can be considered as reductions from finding Hamiltonian paths in planar graphs, which shows the universal traversal problem NP-hard even when restricted to planar systems of gadgets. This leaves open the question of whether this is also true for the gadgets considered in Lemma~\ref{lem:visit dag open}.

\begin{problem}
 Is universal traversal restricted to planar systems of gadgets NP-hard for all true 2-tunnel DAG gadgets?
\end{problem}

\subsection{One-State Gadgets}\label{sec:stateless visit}

In this subsection, we consider universal traversal for $k$-tunnel gadgets with only one state. The reachability problem is clearly in NL for such gadgets, but we will see that universal traversal is often NP-complete. This is a broader analogue of the distinction between reachability and universal traversal that we saw for DAG gadgets in the previous section.

A one-state $k$-tunnel gadget consists of directed and undirected tunnels, and is determined by the number of each type; we assume there is no untraversable tunnel since such a tunnel can be removed without affecting the problem. We fully characterize the complexity of universal traversal for such gadgets.

\begin{theorem}\label{thm:stateless visit}
	Let $G$ be a one-state $k$-tunnel gadget. If $G$ has no directed tunnels, then universal traversal for $G$ is in L. Otherwise, if $k\leq2$ universal traversal for $G$ is NL-complete and if $k\geq3$ universal traversal for $G$ is NP-complete.
\end{theorem}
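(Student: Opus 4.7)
I would split the proof into the three cases of the statement and handle each with a different technique.

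\textbf{Case 1 (no directed tunnels, in L).} A system of one-state, all-undirected gadgets is just an undirected multigraph in which each gadget contributes $k$ parallel edges. Because undirected tunnels are freely reversible and the gadget has only one state, the agent can move freely within the undirected connected component of $s$, so universal traversal succeeds iff every gadget has at least one tunnel with an endpoint in $s$'s connected component. This reduces to polynomially many undirected $s$-to-$v$ connectivity queries, which are in L by Reingold's theorem.

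\textbf{Case 2 ($k \leq 2$ with a directed tunnel, NL-complete).} For NL-containment, I would exploit the one-state nature of the system via an SCC analysis. Replace each undirected tunnel by two reverse-directed arcs to obtain an underlying directed multigraph. Because no traversal ever changes anything, any covering walk from $s$ follows a simple path $C_0 \to C_1 \to \cdots \to C_m$ in the SCC condensation (with $s \in C_0$), using at most one inter-SCC transition between consecutive pairs. With only $k \leq 2$ tunnels per gadget, a covering walk exists iff there is a choice of such a path in the condensation under which every gadget has at least one tunnel that is either intra-SCC in some visited $C_i$ or coincides with a used inter-SCC transition. An NL algorithm can nondeterministically guess the path one step at a time and verify each gadget's coverage via reachability and SCC-membership queries. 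For NL-hardness, I reduce from directed $s$-$t$ reachability: given $(H,s,t)$, realize each edge of $H$ by a copy of $G$'s directed tunnel (isolating any second tunnel of $G$ by leaving its endpoints disconnected from the rest of the connection graph), add a back-arc $(v,s)$ for every vertex so that all vertices reachable from $s$ collapse into one SCC (making internal coverage automatic), and attach one extra copy of $G$ whose directed tunnel leaves $t$ for a fresh sink. Under this augmentation the ``cover every gadget'' requirement collapses to the single question of whether $s$ reaches $t$ in $H$.

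\textbf{Case 3 ($k \geq 3$ with a directed tunnel, NP-complete).} NP-membership holds because a covering walk can be shortened to polynomial length: between successive newly-covered gadgets the agent takes a shortest walk, giving quadratic total length, which an NP machine can guess. For NP-hardness, I would reduce from a standard NP-hard problem such as 3-SAT or directed Hamiltonian Path in 3-regular graphs. The three tunnels per gadget encode a ``choose one of three'' constraint---for example, each clause gadget's three tunnels correspond to its three literals, while variable gadgets built from additional copies of $G$ force a consistent global assignment, with the directed tunnel providing the directional rigidity needed to prevent the walk from backtracking into previously made choices. Any undirected tunnels of $G$ can be isolated, as in Case 2, so that only the three directed tunnels are combinatorially active in the reduction.

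The main obstacle will be the NP-hardness reduction in Case 3, which must work uniformly across all configurations of $k \geq 3$ tunnels with at least one directed. The isolation trick reduces this to the core case of three active directed tunnels, but the vertex/clause gadget construction still requires care to enforce consistency without a persistent state. The NL-hardness reduction in Case 2 is comparatively routine once the back-arc plus extra-gadget augmentation is in place.
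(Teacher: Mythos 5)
Your Case 1 and the membership halves of Cases 2 and 3 match the paper, but both hardness/containment arguments that carry the real weight of the theorem have genuine gaps. For Case 2, the NL-containment step is the hard part of the whole theorem and your sketch does not go through as stated: the coverage conditions for the different gadgets are all coupled through the \emph{same} guessed condensation path, and a logspace machine can neither store that path nor maintain a bit per gadget recording which ones are already covered, so ``guess the path one step at a time and verify each gadget's coverage'' is not an NL algorithm. The paper closes exactly this hole by reducing to 2SAT (a clause $x_1\vee x_2$ per gadget and $\neg x\vee\neg y$ for every pair of tunnels with no path between them in either direction), proving a separate lemma that pairwise compatibility suffices --- the chosen tunnels form a strict total pre-order under reachability and hence can be traversed in some order --- and then eliminating the reachability oracle calls using NL $=$ coNL. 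Your SCC/comparability viewpoint could be pushed to that same 2SAT statement, but that step is precisely what is missing. Your NL-hardness reduction is also incorrect as written: adding only back-arcs $(v,s)$ does not let the agent cover edge-gadgets whose tails are unreachable from $s$, so such instances are NO instances of universal traversal even when $s$ reaches $t$; and you cannot restrict to inputs where every vertex is reachable from $s$ without trivializing reachability. The paper adds arcs $t\to v$ \emph{and} $v\to s$ for every vertex $v$, which is what makes every edge coverable exactly when $s$ reaches $t$.

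For Case 3, your reduction tacitly assumes that after ``isolating'' undirected tunnels you are left with three directed tunnels per gadget, but $G$ is only guaranteed to have \emph{one} directed tunnel among its $k\geq 3$ tunnels. If two of the three clause tunnels are undirected, the agent can backtrack through them and traverse literal paths for both $x$ and $\neg x$, destroying assignment consistency; and you cannot patch this by inserting extra directed copies of $G$ along the literal paths, because those copies would themselves need to be covered even when that literal path is never chosen. The missing idea is the paper's simulation: chain six copies of $G$ along three parallel paths so that each path passes through all six copies and the first and last tunnel of each path is directed (six copies supply the six needed directed tunnels). This compound object behaves as a one-state gadget with three directed tunnels, and --- crucially for universal traversal --- using any one of its three paths covers all six constituent copies, after which the 3SAT reduction with variable branches goes through.
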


We will prove each portion of Theorem~\ref{thm:stateless visit} in a separate lemma. 

\begin{lemma}\label{lem:stateless np}
	Universal traversal with any one-state gadget is in NP.
\end{lemma}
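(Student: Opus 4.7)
The plan is to produce a polynomial-size witness. The crucial feature of a one-state gadget is that no traversal ever changes the state, so the traversability relation is fixed forever. This means the entire system of gadgets is equivalent to walking on a fixed directed multigraph $H$ whose vertices are the (identified) locations of the system and whose edges are exactly the transitions of the gadgets, each edge labeled by which copy of $G$ it belongs to. Under this identification, a universal traversal is simply a walk in $H$ starting at the designated start vertex that uses at least one edge from each of the $g$ "gadget-labels".

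Given that reformulation, I would argue that whenever a universal traversal exists, there exists one of length polynomial in the size of the system. Suppose some (arbitrarily long) witness walk $W$ exists. Let $g_1,g_2,\dots,g_g$ be the gadgets listed in the order in which $W$ first traverses some edge of each of them, and for each $i$ let $e_i$ be the particular edge used by $W$ at that first traversal, with tail $u_i$ and head $v_i$. Then $W$ in particular contains a walk from the start location $s$ to $u_1$, a walk from $v_i$ to $u_{i+1}$ for each $i<g$, all inside the fixed graph $H$. I would now replace each of these connecting walks by a shortest path between the same endpoints in $H$; these exist precisely because $W$ witnesses that each endpoint is reachable from the previous one, and since $H$ is fixed (no state ever changes) reachability is monotone and preserved. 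The resulting walk $W'$ still uses edge $e_i$ for every gadget $g_i$, so it is still a universal traversal.

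The length of $W'$ is at most $g$ first-visit edges plus at most $g$ shortest paths, each of length at most $|V(H)|$. Since $|V(H)|$ and $g$ are both polynomial in the input size, $|W'|$ is polynomial. Hence a nondeterministic machine can guess $W'$ edge by edge, simulate the agent, and check that each of the $g$ gadgets has been traversed, all in polynomial time. This puts universal traversal in $\mathrm{NP}$.

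The only subtlety to worry about is that the shortest-path replacement really is justified: it uses exactly that in a one-state gadget system the edge set of $H$ is independent of the history of the walk, so if $W$ establishes reachability from $v_i$ to $u_{i+1}$ then so does any other walk we substitute. This is the entire content of the "one-state" hypothesis, and with it the argument is straightforward; I do not expect a genuine obstacle beyond phrasing this replacement cleanly.
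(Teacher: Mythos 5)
Your proposal is correct and is essentially the paper's own argument: both proofs observe that a one-state system is a fixed graph, order the gadgets by first traversal, and replace the connecting segments by shortest paths to obtain a certificate of length polynomial (quadratic in the number of tunnels) that can be verified in polynomial time. Your write-up is just a more detailed version of the same shortening argument.
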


\begin{proof}
	If there is a way to use every gadget, this can be done in a number of traversals at most quadratic in the number of tunnels: list the gadgets in an order they can all be visited, and take the shortest path between each pair. The number of gadgets and each such shortest path has length at most the total number of tunnels. So the full solution path can be described in polynomial space. We use this as a certificate; clearly we can check in polynomial time whether a potential solution works.
\end{proof}

\begin{lemma}\label{lem:stateless undir}
	Universal traversal with any one-state $k$-tunnel gadget with no directed tunnel is in L.
\end{lemma}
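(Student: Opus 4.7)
The plan is to reduce universal traversal for such a gadget to undirected $s$--$t$ connectivity and then invoke Reingold's theorem. Let $G$ be a one-state $k$-tunnel gadget whose tunnels are all undirected. Because $G$ has only one state, the set of legal traversals never changes, and because each tunnel is undirected and (by assumption) traversable, each tunnel can be crossed in either direction at any time. Thus the legal moves of the agent are exactly the edges of the undirected multigraph $H$ whose vertices are the locations of the input system and whose edges are the edges of the connection graph together with one edge per tunnel of each copy of $G$.

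I claim that universal traversal is solvable if and only if every copy of $G$ in the system has at least one location in the same connected component of $H$ as the starting location $s$. Necessity is immediate: to make a traversal in a copy of $G$, the agent must first stand at one of its locations, and reaching that location is equivalent to connectivity in $H$. For sufficiency, the agent can freely traverse edges of $H$ in either direction in any order, so it enumerates the copies of $G$ in any fixed order, walks in $H$ from its current position to some location of the next copy, crosses one of that copy's tunnels, and proceeds; concatenating these walks yields a single path from $s$ that makes a traversal in every copy.

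Given this characterization, the algorithm is: using a logspace counter iterate over the copies of $G$ in the input, and for each copy iterate over its $2k$ locations, invoking Reingold's undirected $s$--$t$ connectivity algorithm on $H$ from $s$ to each of those locations; accept iff every copy has at least one location connected to $s$ in $H$. The graph $H$ is presented on demand from the input in logspace (its edges are just the connection-graph edges and the tunnels, both of which are read directly from the description of the system), and undirected connectivity is in \textbf{L}, so the whole procedure is a composition of logspace routines and runs in \textbf{L}.

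There is no real obstacle beyond checking the logspace composition, which is standard: the outer loop stores only $O(\log n)$ bits of index, and the inner connectivity test reuses its workspace for each call. The only thing worth flagging is that identifying connected components of the connection graph (so that two locations joined by the connection graph are treated as the same vertex) is itself an instance of undirected reachability and can be folded into the same $H$, so no separate preprocessing is needed.
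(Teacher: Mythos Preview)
Your proposal is correct and matches the paper's proof essentially line for line: both reduce the problem to checking, for each gadget, whether some location is reachable from the start in the undirected graph of connections plus tunnels, invoking Reingold's theorem inside a logspace loop over gadgets and locations. Your write-up is a bit more explicit about building the graph $H$ and about logspace composition, but the argument is the same.
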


\begin{proof}
	We solve the universal traversal problem as follows. Iterate over each gadget. For each one, iterate over its locations. For each location, we check whether there is a path from the start location to that location; this is reachability in an undirected graph which can be solved in logarithmic space \cite{connectivity}. If there is a path, we move on to the next gadget. If there is no path, we move on to the next location on the gadget, unless this was the last location, in which case we reject. After finishing all gadgets, we accept.

	This algorithm can clearly run in logarithmic space. Since all tunnels are undirected, the agent can visit each gadget in turn and return to the start location after each one. So the agent can use every gadget exactly when there is a path to every gadget from the start location, which is what the algorithm checks.
\end{proof}

\begin{lemma}\label{lem:stateless nl-hard}
	Universal traversal with any one-state $k$-tunnel gadget with a directed tunnel is NL-hard.
\end{lemma}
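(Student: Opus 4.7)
The plan is to reduce from directed $s$-$t$ reachability (STCON), which is NL-complete. Given an STCON instance consisting of a digraph $D$ with source $\sigma$ and target $\tau$, we construct a directed graph $H$ on $V(D)$ whose edge set is $E(D) \cup \{(\tau, v) : v \in V(D) \setminus \{\tau\}\} \cup \{(v, \sigma) : v \in V(D)\}$; call the added edges ``broadcast'' and ``return'' edges respectively. The system of gadgets will have one copy of the gadget per edge of $H$: each copy is placed so that one of its directed tunnels is oriented to realize that edge in the connection graph, while for $k \ge 2$ the remaining tunnels' locations are placed at fresh, otherwise-disconnected vertices so they can never be reached or traversed. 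The agent starts at $\sigma$.

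The forward direction is direct: to satisfy universal traversal the agent must cover each return-edge gadget $(v, \sigma)$, and covering such a gadget requires the agent to visit $v$; in particular the agent must reach $\tau$. However, the broadcast edges from $\tau$ are unusable until $\tau$ has first been reached, and the return edges only lead back to $\sigma$, so the agent's first visit to $\tau$ must occur along a path using only edges of $D$, witnessing STCON.

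The main obstacle will be the backward direction: assuming $\sigma$ reaches $\tau$ in $D$, we must exhibit a single walk from $\sigma$ that traverses every gadget at least once. We will argue that $H$ is strongly connected in this case: any vertex $v$ reaches $\sigma$ directly via the return edge $(v, \sigma)$, $\sigma$ reaches $\tau$ via a $D$-path by assumption, and $\tau$ reaches any other vertex $v$ via the broadcast edge $(\tau, v)$. We then invoke the standard fact that in a finite strongly connected digraph one can construct a walk from any starting vertex that covers every edge --- for instance, by enumerating the edges and, between consecutive targets, traversing a shortest path from the current location to the next edge's tail. Such a walk solves the universal traversal instance, completing the log-space reduction and establishing NL-hardness.
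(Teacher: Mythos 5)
Your proposal is correct and matches the paper's proof essentially exactly: both reduce from directed $s$-$t$ connectivity by adding edges $t\to v$ and $v\to s$ for every vertex, realizing each edge by one directed tunnel of a gadget, and both argue that reaching $t$ is forced by the added gadgets while a path to $t$ lets the agent shuttle out from $t$ and back to $s$ to cover every edge. The only cosmetic difference is that you phrase the backward direction via strong connectivity of the augmented graph rather than the paper's explicit ``go to $t$, cross one tunnel, return to $s$'' routine.
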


\begin{proof}
	We reduce from $s$-$t$ connectivity in directed graphs, which is NL-complete \cite{connectivity}. We will use only one directed tunnel in each gadget.

	Given a directed graph with vertices $s$ and $t$, we first add edges $t\to v$ and $v\to s$ for each vertex $v$. Then we replace each edge with a directed tunnel in a gadget. This can clearly be done in logarithmic space. 

	If there is no path from $s$ to $t$, then the agent can never traverse the tunnel $t\to s$. If there is such a path, the agent can go to $t$, go to the entrance of a tunnel, go through the tunnel, and return to $s$. By doing this for each edge in the graph, the agent can make a traversal in every gadget. So the universal traversal problem is solvable exactly when there is a path from $s$ to $t$.
\end{proof}

\begin{lemma}\label{lem:stateless nl}
	Universal traversal with any one-state $k$-tunnel gadget is in NL if $k\leq2$. 
\end{lemma}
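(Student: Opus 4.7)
Since the gadget is one-state, each tunnel is a static edge (for undirected tunnels, a pair of edges in both directions), so the whole system of gadgets becomes a fixed directed multigraph $G$ whose edges are partitioned by gadget (with at most two tunnels per gadget). The plan is to compute the SCCs of $G$ in NL and then recast universal traversal as a $2$-SAT problem on the choice of ``which tunnel to commit to'' for each gadget; since $k \leq 2$, each gadget offers at most two choices, so every constraint becomes a $2$-clause, and $2$-SAT is NL-complete.

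The structural observation driving the reduction is that any walk from $s$ visits SCCs in a non-increasing sequence in the condensation DAG of $G$, because leaving an SCC via a down-edge is irreversible, while inside an SCC the walk can traverse edges arbitrarily by strong connectivity. The claim I would prove is: a selection of one tunnel per gadget is realizable by a walk from $s$ if and only if (i)~the source of each selected tunnel lies in an SCC reachable from $s$; (ii)~no SCC contains two distinct selected ``exit-edges'' (i.e., edges whose target lies in a strictly lower SCC), since only the final edge used inside an SCC can leave it; (iii)~the source-SCCs of any two selected tunnels are comparable in the condensation, so together they form a descending chain $D_1 > D_2 > \cdots > D_r$; and (iv)~whenever the selected exit-edge of $D_j$ lands in an SCC $X$, $X$ reaches every strictly lower selected SCC, so the walk can continue along the chain. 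Necessity is immediate from the chain structure of SCC visits. For sufficiency, I would build the walk explicitly: navigate from $s$ to $D_1$, use strong connectivity inside each $D_j$ to traverse all its selected internal tunnels, and take the unique selected exit-edge (if any, otherwise any available edge) to land in a position that by (iv) still reaches $D_{j+1}$.

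Every condition in (i)--(iv) is either unary or a pairwise predicate on the choice variables, so with at most two choices per gadget the feasibility question encodes into a $2$-SAT instance of polynomial size whose clauses are computable by SCC and reachability queries on $G$, all of which lie in NL; $2$-SAT itself is in NL, so the whole procedure runs in NL. The main obstacle is proving sufficiency of the pairwise conditions, in particular (iv): an exit-edge out of $D_j$ may land in an SCC $X$ that is itself not selected, yet must be compatible with the remaining chain. What rescues the pairwise formulation is that (iii) forces a total order on the selected source-SCCs, so transitivity of reachability in the condensation DAG lets the requirement ``$X$ reaches $D_{j+1}$'' propagate to $X \geq D_k$ for every later $D_k$; carefully verifying this transitivity step, and checking it really does globalize correctly, is the delicate part of the argument.
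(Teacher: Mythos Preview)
Your approach is correct and lands in the same place (a 2SAT reduction solved in NL), but the route is considerably more elaborate than the paper's. The paper bypasses SCC decomposition entirely: it puts one variable per tunnel, one clause $x_1\vee x_2$ per gadget, and one clause $\neg x\vee\neg y$ for each pair of tunnels with no path from the exit of either to the entrance of the other. The correctness argument is then a single clean observation: on a satisfying set $T$, the relation ``exit of $x$ reaches entrance of $y$'' is transitive and, by construction, total on $T$, hence a strict total preorder that can be linearized into a traversal order. Your conditions (ii)--(iv) are all consequences of that single pairwise reachability test, so the SCC machinery is doing work the paper avoids. What your version buys is an explicit walk construction through the chain of SCCs and an explicit treatment of the start location via condition (i) --- something the paper's write-up actually glosses over (it never adds a clause tying the first selected tunnel back to $s$; a dummy tunnel at $s$ fixes this). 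Conversely, the paper's version buys simplicity and a cleaner NL implementation story, which it spells out carefully using $\text{NL}=\text{coNL}$ to evaluate clause-membership on the fly; you should make that step explicit too, since writing down the full 2SAT instance would cost polynomial space. Finally, your worry that (iv) needs a delicate transitivity argument to ``globalize'' is somewhat misplaced: the pairwise clause for (iv) already directly asserts that the landing SCC of each exit-edge reaches the source-SCC of \emph{every} other selected tunnel below it, so no propagation is required --- the chain structure from (iii) plus the direct pairwise checks from (iv) suffice.
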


\begin{proof}
	We provide an algorithm which runs in nondeterministic logarithmic spaces with an oracle for reachability in directed graphs. This shows that the universal traversal problem is in NL$^{\text{NL}}$. We will then explain how the algorithm can be adapted to run in NL. The algorithm first uses the oracle to convert the problem to an instance of 2SAT. It then solves this instance, since 2SAT is in NL.

	The 2SAT formula has a variable for each tunnel in the system of gadgets; a satisfying assignment will provide a set of tunnels we can traverse to solve the universal traversal problem. For each gadget with tunnels $x_1$ and $x_2$, we have a clause $x_1\vee x_2$ (if the gadget has only one tunnel, $x_1=x_2$). For each pair of distinct tunnels $x$ and $y$, we query the reachability oracle to determine whether there is a path from the exit of $x$ to the entrance of $y$ or from the exit of $y$ to the entrance of $x$ (if $x$ or $y$ is undirected, we can use either location as the entrance or exit). If there is no path in either direction, we have a clause $\neg x\vee\neg y$.

	We prove that this algorithm works, and then adapt it to an $\text{L}^{\text{NL}}$ algorithm which is known to equal NL \cite{nl=conl}.

	\begin{lemma}
		The 2SAT formula above is satisfiable if and only if the universal traversal problem has a solution.
	\end{lemma}

	\begin{proof}
	First suppose the universal traversal problem is solvable, and consider the assignment which contains the tunnels which are used in the solution. Since the solution must use a tunnel in every gadget, each clause $x_1\vee x_2$ is satisfied. If the solution uses both tunnels $x$ and $y$, there must be a path in some direction between $x$ and $y$, namely the path the agent takes between the two tunnels. For each clause $\neg x\vee\neg y$ in the formula, there is no such path, so the solution does not use both tunnels $x$ and $y$, so the clause is satisfied.

	Now suppose the 2SAT formula is satisfiable, and consider the set $T$ of tunnels corresponding to true variables in a satisfying assignment. Because of the clauses $x_1\vee x_2$, $T$ must contain a tunnel in each gadget. We define a relation $\rightarrow$ on $T$ where $x\rightarrow y$ if there is a path from the exit of $x$ to the entrance of $y$. As suggested by the notation, this relation is transitive: if $x\rightarrow y\rightarrow z$, there is a path from the exit of $x$ to the entrance of $y$, across $y$, and then to the entrance of $z$, so $x\rightarrow z$. Since each clause $\neg x\vee\neg y$ is satisfied, for any distinct $x,y\in S$ we have $x \rightarrow y$ or $y\rightarrow x$. That is, $\rightarrow$ is a strict total pre-order.

	Then there must be a (strict) total order $\prec$ on $T$ such that $x\prec y\implies x\rightarrow y$: define another relation $\sim$ where $x\sim y$ if $x=y$ or both $x\rightarrow y$ and $y\rightarrow x$. Then $\sim$ is clearly an equivalence relation, and $\rightarrow$ is a total order on $T/\sim$. We can construct $\prec$ by putting the equivalence classes under $\sim$ in order according to $\rightarrow$, and arbitrarily ordering the elements of each equivalence class.

	The agent can traverse the tunnels in $T$ in the order described by $\prec$. This is a solution to the universal traversal problem.
	\end{proof}

	We run the algorithm in nondeterministic logarithmic space as follows. Begin with an NL algorithm that solves 2SAT, and assume the input is given in a format where we can check whether a clause $a\vee b$ is in the formula by checking a single bit for literals $a$ and $b$. For example, the input can be given as a matrix with a row and column for each literal. We run this nondeterministic 2SAT algorithm, except that whenever we would read a bit of the input, we perform a procedure to determine whether that clause is in the formula.

	Suppose the algorithm to solve universal traversal wants to know whether $a\vee b$ is in the formula. If $a$ and $b$ are both positive literals, we simply check whether they correspond to tunnels in the same gadget. If $a$ and $b$ have different signs, the clause is not in the formula. The interesting case is when $a=\neg x$ and $b=\neg y$ for tunnels $x$ and $y$, where we need to determine whether there is a path from the exit of $x$ to the entrance of $y$ or vice-versa.

	In this case, we nondeterministically guess whether the clause exists, and then check whether the guess was correct. If we guess it does exist, we run a coNL algorithm to verify that there is no path from the exit of $x$ to the entrance of $y$ or vice versa; this can be converted to an NL algorithm. If the verification succeeds, we proceed; if it fails, we halt and reject. Similarly, if we guess the clause does not exist, we run an NL algorithm to verify that there is such a path, proceeding on success and rejecting on failure.

	Consider the computation branches which have not rejected after this process. If the clause exists, the branch which attempted to verify it does not exist has entirely rejected, and the branch which attempted to verify it does exist has succeeded in at least one branch. So there is at least one continuing branch, and every such branch believes that the clause exists. Similarly if the clause does not exist, we end up with only branches which guessed that it does not exist. 

	At this point, we continue with the 2SAT algorithm, since every remaining branch knows the correct value for the input bit we have read.
\end{proof}

\begin{lemma}\label{lem:visit np-hard}
	Universal traversal with any one-state $k$-tunnel gadget with a directed edge is NP-hard when $k\geq3$.
\end{lemma}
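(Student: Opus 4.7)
The plan is to reduce from 3SAT. Given a formula $\phi$ with variables $x_1, \ldots, x_n$ and clauses $C_1, \ldots, C_m$, I will construct a system of gadgets whose universal traversal is solvable if and only if $\phi$ is satisfiable.

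The main idea is to build a layered connection graph in which each variable $x_a$ is assigned two \emph{disjoint} regions $R_a^+$ and $R_a^-$ (representing its two possible values), joined to a common ``choice point'' $L_a$ before layer $a$ and a common ``merge point'' $L_{a+1}$ after layer $a$, only via the directed tunnels of structure gadgets. For each clause $C_j = \ell_{j,1} \vee \ell_{j,2} \vee \ell_{j,3}$ with $\ell_{j,k}$ a literal of variable $x_{a_k}$ of polarity $\epsilon_k \in \{+,-\}$, I place a \emph{clause gadget} (a copy of $G$) so that the $k$th tunnel's two endpoints both lie inside region $R_{a_k}^{\epsilon_k}$; any remaining tunnels of $G$ have their endpoints isolated so that the agent cannot reach them. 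For each variable $x_a$ and each sign $\sigma \in \{+,-\}$, I will add two further copies of $G$: a \emph{choice gadget} $C_a^\sigma$ whose directed tunnel goes from $L_a$ into $R_a^\sigma$, and a \emph{transition gadget} $T_a^\sigma$ whose directed tunnel goes from $R_a^\sigma$ to $L_{a+1}$. Crucially, every other tunnel of $C_a^\sigma$ and of $T_a^\sigma$ will have both of its endpoints placed inside the opposite region $R_a^{-\sigma}$. That way, whichever region the agent enters, every structure gadget gets visited---via its directed tunnel when its sign is chosen, or via a ``hook'' tunnel from the opposite region otherwise.

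For correctness, given a satisfying assignment $\alpha$, the agent's strategy at each $L_a$ is to take $C_a^{\alpha(x_a)}$'s directed tunnel into $R_a^{\alpha(x_a)}$, traverse every tunnel accessible inside that region (visiting the relevant clause gadgets, the hook tunnels of $C_a^{-\alpha(x_a)}$ and $T_a^{-\alpha(x_a)}$, and the directed tunnel of $T_a^{\alpha(x_a)}$), and then proceed to $L_{a+1}$; since $\alpha$ satisfies every clause, each clause gadget has some literal-tunnel in a visited region, so every gadget is traversed. Conversely, because $R_a^+$ and $R_a^-$ are disconnected in the connection graph and the only connections among the $L$'s and the $R$'s go through one-way directed tunnels, the agent's trajectory is committed to exactly one region per variable, inducing a partial assignment; universal traversal then forces every clause gadget to be visited, which requires its clause to be satisfied.

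The hard part will be verifying the commitment argument when $G$ has only a single directed tunnel: in that case the hook tunnels are undirected and hence reversible, and one might worry that the agent could use them to escape between regions. The key observation I will rely on is that each hook tunnel has both endpoints inside a single region, so traversing it---in either direction---never carries the agent from one region to the other; combined with the one-way nature of the choice and transition directed tunnels, this establishes the one-region-per-variable invariant and completes the reduction.
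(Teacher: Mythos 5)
Your reduction is correct, but it is implemented quite differently from the paper's. The paper first builds an intermediate simulation: six copies of $G$ are chained along three paths, with the first and last tunnel of each path directed, so that any entry forces a complete traversal that visits all six copies; this simulates a one-state gadget with three \emph{directed} tunnels. The 3SAT reduction is then done with that simulated gadget, placing each clause gadget's directed tunnels directly \emph{on} the literal branch paths, so directedness alone prevents the agent from backtracking into the unchosen branch and ``visited gadget'' automatically means ``satisfied clause.'' You instead keep each clause as a single copy of $G$ placed \emph{off} the path, with each literal tunnel having both endpoints inside the region for that literal, and you enforce the one-region-per-variable structure with dedicated choice/transition copies of $G$ via their single directed tunnel, adding hook tunnels in the opposite region so that the structural copies are always visited regardless of the truth-value chosen. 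Your key observation---that a tunnel with both endpoints in one region can never move the agent between regions, even if undirected, so only the directed choice/transition tunnels matter for the commitment invariant---is exactly what makes this work without the paper's simulation step. The trade-off: the paper pays for a simulation lemma but then argues with all-directed tunnels, which makes the commitment argument immediate; you avoid the simulation (and the factor-of-six blowup per clause) but must separately verify, as you do, that hooks and clause tunnels provide no cross-region or backward connectivity, and that the structural gadgets get visited on both sides of each choice.
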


\begin{proof}
	Let $G$ be any one-state $k$-tunnel gadget with a directed edge. We will only use three tunnels in each copy of $G$, at least one of which is directed. We begin by building the one-state gadget with three directed tunnels. To do this, we connect six copies of $G$ along three paths, such that each path goes through all six copies and the first and last tunnel on each path is directed. Six copies of $G$ is enough to supply these directed edges. The agent can only enter each tunnel of this construction from one side. When it does, it has no choice but to continue all the way through the construction, and in the process it uses all six gadgets involved. So this simulates the one-state gadget with three directed tunnels, and it suffices to show NP-hardness for this specific gadget.

	We prove NP-hardness by a reduction from 3SAT. Each clause becomes a copy of the gadget with three directed tunnels. There are a sequence of branches corresponding to variables, which go through tunnels in the gadgets corresponding to clauses containing the variable or its negation. At the branch corresponding to $x$, the agent must choose between a path which goes through the gadgets corresponding to clauses with $x$ and a path which goes through gadgets corresponding to clauses with $\neg x$. These paths merge before the branch for the next variable. The start location is at the first branch.

	A path through this system of gadgets is exactly an assignment for the formula, and the gadgets visited correspond to satisfied clauses. So it is possible to visit every gadget if and only if the formula is satisfiable.
\end{proof}

Combining Lemmas~\ref{lem:stateless np} through \ref{lem:visit np-hard}, we have Theorem~\ref{thm:stateless visit} characterizing the complexity of universal traversal for one-state $k$-tunnel gadgets.

\subsection{Reversible Deterministic Gadgets}\label{sec:unbounded visit}

In this subsection, we prove Theorem~\ref{thm:unbounded visit}, which shows that the complexity of the universal traversal problem for a reversible deterministic $k$-tunnel gadget is the same as the complexity of the reachability problem for that gadget.

\begin{theorem}\label{thm:unbounded visit}
	Let $G$ be a reversible deterministic $k$-tunnel gadget. Then universal traversal for $G$ is PSPACE-complete if $G$ has interacting tunnels, and is in NL otherwise.
\end{theorem}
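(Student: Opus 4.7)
The theorem has two claims: PSPACE-completeness when $G$ has interacting tunnels, and NL membership otherwise. Containment in PSPACE is free from Lemma~\ref{lem:visiting in pspace}, so the real work is PSPACE-hardness in the interacting case and an NL algorithm in the non-interacting case.

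For PSPACE-hardness, the plan is to reduce from reachability for $G$ itself, which is PSPACE-complete precisely when $G$ has interacting tunnels by \cite{gadgets2}. Starting from a reachability instance with system $S$, start $s$, and target $t$, I would build a universal traversal instance by retaining $S$ and appending an attachment of one or more copies of $G$ that is locked off from the rest of the system until the agent reaches $t$. This mirrors the idea of Lemma~\ref{lem:visit dag open}: the attachment contains at least one gadget that cannot be traversed without first visiting $t$, so traversing it certifies reachability of $t$. Because $G$ is reversible, after reaching $t$ the agent can retrace its entire route back toward $s$ and then use the now-accessible attachment, completing a universal traversal; conversely, if $t$ is unreachable, the attachment gadget stays unused and some gadget is never visited. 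The main obstacle is ensuring that every original gadget of $S$ is also traversed in the YES direction, since a reachability solution in $S$ might legitimately skip some gadgets. I would handle this either by noting that the PSPACE-hardness construction of \cite{gadgets2} already routes every solution path through every gadget, or by inserting a cascade that forces the agent to detour through each $g\in S$ on the way from $t$ to the unlocked attachment.

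For the NL upper bound in the non-interacting case, the plan is to exploit decomposition together with reversibility. Since no tunnel's traversal affects any other, the state of each tunnel evolves independently of the rest of the system, and because transitions are reversible the agent can undo any prefix of its movement at will. Universal traversal therefore reduces to the purely local question: for every gadget $g$ in the input system, is at least one location of $g$ reachable from the start location? Each such query is an instance of reachability for a non-interacting reversible deterministic system, which is in NL (indeed in L by \cite{gadgets2}), so iterating over the polynomially many gadgets in nondeterministic logarithmic space and verifying each reachability yields an NL algorithm. The only subtlety is confirming that individually reachable gadgets can all be visited by a single common path, which follows by concatenating round-trips: visit one gadget and retrace to the start, then visit the next, and so on.
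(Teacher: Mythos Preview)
Your NL argument for the non-interacting case is essentially the paper's: check reachability of each gadget from the start, and stitch the individual round-trips together using reversibility. That part is fine.

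The PSPACE-hardness sketch has a real gap at exactly the place you flag. Neither of your two proposed fixes works as stated. Option~1 (rely on the structure of the \cite{gadgets2} hardness instances) would require verifying that \emph{some} winning path in every YES instance visits every gadget; this is an extrinsic property you have not established, and tying the argument to internals of another paper's construction is fragile in any case. Option~2 (``insert a cascade'') is not a mechanism: you have not said what the cascade is built from, why the agent can traverse an arbitrary gadget $g$ once it detours to $g$'s vicinity (the gadget might have no transition available at the location the cascade delivers the agent to), or why the cascade's own constituent gadgets get traversed.

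The paper's fix is concrete and worth knowing. It uses two facts: (i) any such $G$ simulates a 1-toggle, and the simulation has the property that a single traversal of the simulated 1-toggle visits every copy of $G$ inside it; (ii) reversibility lets the agent cross any gadget and immediately undo the crossing. The reduction then adds a simulated 1-toggle from the target $t$ to \emph{every} location of the original system, plus one extra gadget at $t$ that is only usable from $t$. Once the agent reaches $t$, for each original gadget $g$ it rides a 1-toggle to a location of $g$, traverses $g$ and reverses, and rides the 1-toggle back; property~(i) guarantees the 1-toggle's internal copies of $G$ are also visited. The single ``witness'' gadget at $t$ forces reachability of $t$ in the converse direction. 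Your attachment idea captures only this last piece; the fan of 1-toggles to all locations is the missing ingredient that resolves the obstacle you identified.
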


\begin{proof}
	Suppose first that $G$ has no interacting tunnels. Then reachability for $G$ is in NL \cite{gadgets2}. It follows that the question of whether the agent can make a traversal in a target gadget is in NL, since we can solve the reachability question for each usable location of the gadget. To solve universal traversal for $G$, we check for each gadget whether the agent can use that gadget from the original configuration. If every gadget passes this check, we accept; otherwise we reject.

	The output of this algorithm is the AND of the outputs of polynomially many NL algorithms, so it runs in NL. The algorithm works because the agent can follow the path to a gadget, use that gadget, and then reverse its path back to the initial configuration. Doing this for each gadget in series, the agent can make a traversal in every gadget exactly when each gadget can be reached from the initial configuration.

	Now suppose $G$ has interacting tunnels. Containment in PSPACE is given by Lemma~\ref{lem:visiting in pspace}. To show PSPACE-hardness, we reduce from reachability for $G$, which is PSPACE-complete \cite{gadgets2}. By Lemma~2.5 in \cite{gadgets2}, $G$ simulates a 1-toggle (the 2-state gadget gadget with a single directed tunnel which flips when traversed). Furthermore, when the agent traverses the simulation of the 1-toggle, it visits every gadget in the simulation; this is clear from examining the proof of this lemma.

	Given an instance of the reachability problem for $G$, we modify it by adding a 1-toggle from the target location to each location in the system of gadgets. We also add a gadget at the target location which is usable if and only if the agent reaches the target location. This can clearly be done in polynomial time.

	If the agent can reach the target location, then it can travel along one of these 1-toggles, cross back and forth across a gadget, and return along the 1-toggle. In this way, the agent can make a traversal in every gadget, including those in the simulated 1-toggles and the gadget added at the target location. Conversely, in order to traverse every gadget, the agent most traverse the gadget at the target location, which requires being able to reach the target location in the original reachability instance. Thus the reachability instance is solvable if and only if the constructed universal traversal instance is solvable.
\end{proof}

\section{Gadget Reconfiguration}\label{sec:reconfiguration}
In this section we study the question of whether an agent has a series of moves after which the system of gadgets will be in some target configuration. In Section ~\ref{sec:ReconfigurationReversible} We show that for reversible, deterministic gadgets the reconfiguration problem is always PSPACE-complete if the reachability problem is PSPACE-complete and give an example of a reversible non-deterministic gadget with non-interacting tunnels for which the reconfiguration problem is PSPACE-complete (whereas reachability for any gadget with non-interacting tunnels is always in NL).  Section~\ref{sec:VerifiedGadgets} shows some methods for constructing new PSPACE-complete gadgets from known ones and shows the reconfiguration problem can be PSPACE-complete even when a gadget does not change traversability. Finally, in Section~\ref{sec:NPReDAG}, we show an interesting connection between reconfiguration problems and bounded reachability problems, expanding the classes of gadgets we know to be in NP. We also exhibit a gadget for which the reachability question is NP-complete but the reconfiguration question is in P.

\subsection{Reconfiguring Reversible Gadgets} \label{sec:ReconfigurationReversible}
In this section we first show that for any reversible gadget the reachability problem being PSPACE-complete implies the reconfiguration problem is also PSPACE-complete. We then exhibit a reversible, deterministic gadget with non-interacting tunnels for which the reconfiguration problem is PSPACE-complete showing an example where reconfiguration is a harder problem than reachability.

\begin{theorem}
	Reconfiguration motion planning is PSPACE-complete for any set of reversible gadgets for which reachability motion planning is PSPACE-complete.
\end{theorem}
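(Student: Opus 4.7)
The plan is two-fold: show PSPACE membership by a standard nondeterministic configuration-guessing argument, and show PSPACE-hardness by reducing from reachability for $F$ via the addition of a single ``witness'' gadget at the reachability target. For membership, a configuration (the agent's location together with a state for each gadget) takes polynomial space, so a nondeterministic procedure that guesses successive transitions and accepts when the configuration equals the target runs in NPSPACE, and Savitch's theorem upgrades this to PSPACE.

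For hardness, given a reachability instance over $F$ with start $s$ and target $t$, I would pick any $g\in F$ containing a state-changing transition $(S,a)\to(S',b)$ with $S\neq S'$. Such a $g$ must exist, since otherwise every transition in $F$ preserves its gadget's state and reachability for $F$ collapses to directed graph reachability, which is in NL, contradicting PSPACE-completeness. The reduction leaves the original system intact, adjoins a fresh copy $W$ of $g$ initialized in state $S$, and connects both $a$ and $b$ of $W$ to $t$ in the connection graph. The initial and target configurations agree on every original gadget and differ only in that $W$ must end in state $S'$.

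For the forward direction, if the agent can reach $t$ via transitions $\tau_1,\dots,\tau_m$ with interleaved connection-graph walks, it performs that sequence to arrive at $t$, walks to $a$, fires $(S,a)\to(S',b)$ on $W$, walks from $b$ back to $t$, and then undoes $\tau_m,\tau_{m-1},\dots,\tau_1$ in reverse order, taking at each step the reverse transition provided by reversibility and retracing each connection-graph walk. This restores every original gadget to its initial state while leaving $W$ in $S'$, matching the target. For the backward direction, $W$ is connected to the rest of the system only through $a$ and $b$, both attached only to $t$, so any change of $W$'s state forces the agent to visit $t$, proving reachability.

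The delicate step is the reversal in the forward direction: with possibly nondeterministic reversible gadgets, the agent must commit, at each undo step, to the specific reverse of the transition it just took rather than to some other outgoing transition available at its current state and location. Reversibility places exactly that reverse edge in the transition graph at the agent's current state/location, so the agent can elect to take it, and the reversal proceeds cleanly through the entire sequence, ending with the original gadgets restored and $W$ toggled as required.
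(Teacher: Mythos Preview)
Your proof is correct and follows essentially the same approach as the paper: attach a single witness gadget at the reachability target whose state must change, use reversibility to undo the path after toggling it, and observe that any solution must visit the target to touch the witness. Your write-up is in fact more careful than the paper's on several points (explicit PSPACE membership, justification that some $g\in F$ has a state-changing transition, and the remark that in the nondeterministic case the agent must elect the specific reverse edge), but the underlying reduction is the same.
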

\begin{proof}
	We use the same technique as the one used to show reconfiguration Nondeterministic Constraint Logic is PSPACE-complete \cite{hearn2005pspace}. First we take a hard instance of the reachability motion planning. Now at the target location we add a loop with a single gadget which permits a traversal which changes its state. For the reconfiguration problem, we set the target states of all but the newly added gadget to be the same as the initial states, and we set the target state of the added gadget to be one reachable by making a traversal in the loop. Since the gadgets in this system are all reversible, the agent can always take the inverse transitions that have been made so far to return the start location and restore the states of all the gadgets to their initial states. Thus if the agent can reach the added gadget, the agent will be able to traverse the gadget and then undo the rest of the path except for that final traversal. Since the agent must interact with the added gadget to achieve the desired reconfigured state, the agent must be able to reach the gadget. Thus the agent is able to solve the reconfiguration problem if and only if the agent could solve the reachability problem.
\end{proof}

\subsubsection{PSPACE-complete Reversible, Deterministic Gadget with Non-interacting Tunnels}
\label{sec:non-interacting PSPACE}

There are cases where the reconfiguration problem can be harder. Below we describe a reversible, deterministic gadget with non-interacting tunnels for which the reconfiguration problem is PSPACE-complete.

The \rdni{} is a reversible, deterministic, 12 state, two-tunnel gadget shown in Figure~\ref{fig:12StateGadget}. We will refer to states the right-top and right-middle states as the right leaf states and left-bottom and middle-bottom states as the bottom leaf states. We call the right leaf states and their two adjacent states the right square and similarly the bottom leaf states and their two adjacent states the bottom square. Notice that from some states a tunnel either only once in the same direction or potentially twice. Although going through one tunnel never changes the traversability of the other tunnel, it may change whether that tunnel can be traversed twice in a row in the same direction. 
To show PSPACE-completeness we will first show that a cooperative, multi-agent reconfiguration problem is hard by reduction from reconfiguration motion planning with a locking 2-toggle. We then show how we can augment that construction to allow a single agent to simulate the actions of all of the others in our multi-agent construction.

\begin{figure}
	\centering
	\includegraphics{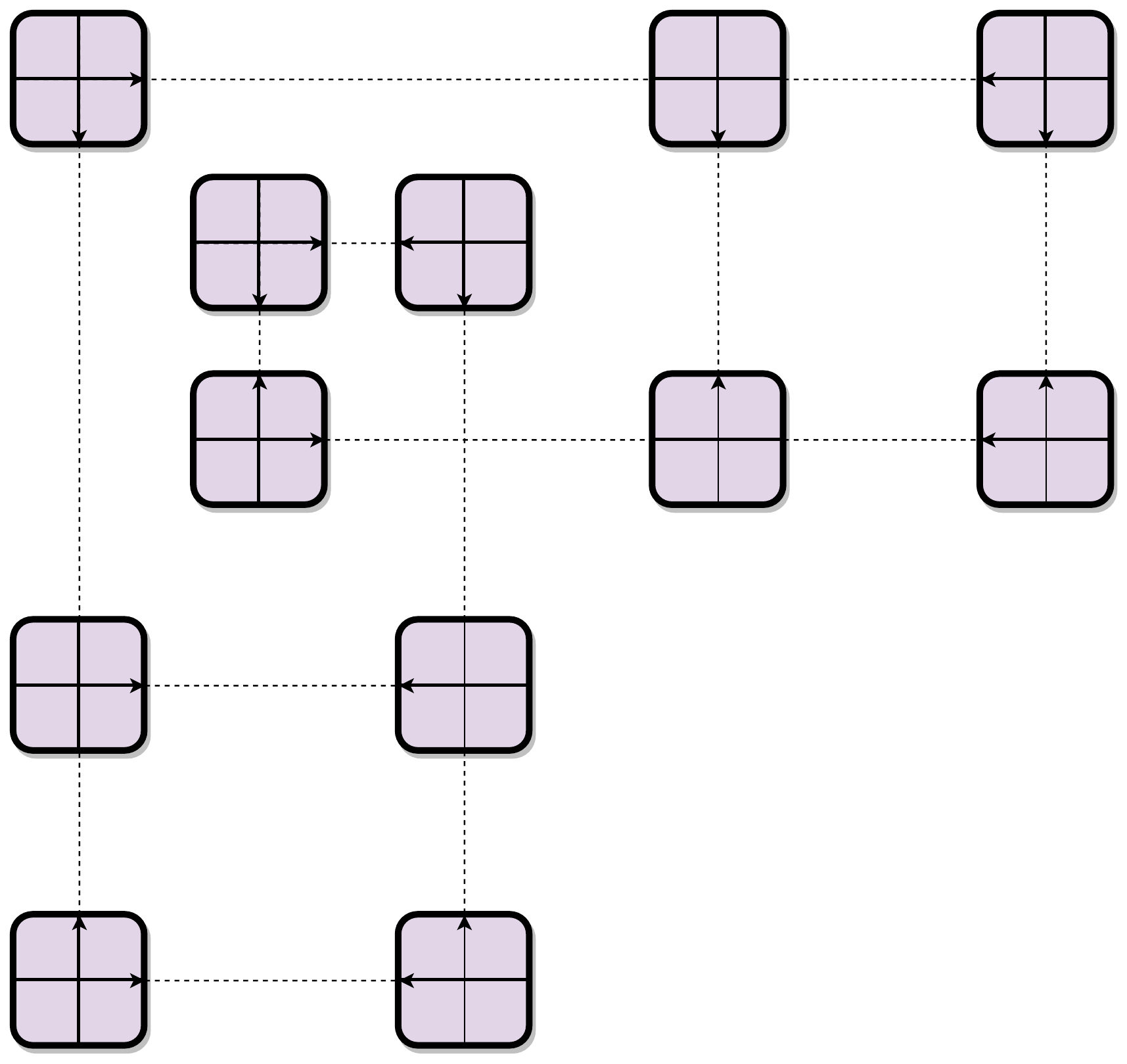}
	\caption{The state diagram of the \rdni}
	\label{fig:12StateGadget}
\end{figure}

\paragraph{Multi-agent 1-Toggle}
Recall a 1-toggle is a 2-state, 1-tunnel, reversible, deterministic gadget that allows a directed traversal in one direction in one state and the other direction in the other state. A regular 1-toggle can be easily constructed from the \rdni{} by taking a single tunnel in a leaf state. Instead we will build a gadget that does not allow individual agents through at all, but if it has an agent on either side of it, a third agent can use the gadget as though it were a 1-toggle. Our construction will only work as intended if there are three or fewer agents adjacent to the gadget at any point in time; however, these gadgets will only be used in a way that this condition remains satisfied.

\begin{figure}
	\centering
	\includegraphics{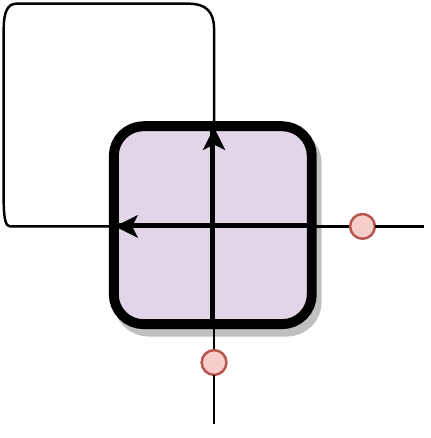}
	\caption{The multi-agent 1-toggle. The two helper agents are denoted by red dots.}
	\label{fig:multi-agent-1-toggle}
\end{figure}

To build our multi-agent 1-toggle we simply connect the tunnels together, as shown in Figure~\ref{fig:multi-agent-1-toggle} and consider the bottom-middle state to be the canonical configuration for the leftward pointing state of the 1-toggle, and the middle-right state to be the canonical configuration for the rightward pointing state. More configurations will need to be considered shortly, but we first describe the intended usage. In the bottom-middle state two agents can move up into the middle connection of the gadget, then the remaining agent can moves left joining them in the middle connection. The gadget is now in one of the upper left states. If one agent exits down, the other two can then exit to the right putting the gadget in the desired middle-right state with two agents on the right side. The transition in the other direction is symmetric.

Next we argue that the only ways agents can move through this gadget are equivalent to the intended usage. First, consider the case where there is no more than one agent on either side of the gadget. From the bottom middle state no right or down traversals can be made. Further, if no more than one up traversal is made then no more than one down traversal can be made, and the same is true for right and left. Thus the agents can put the gadget into four pairs of agent location and gadget state pairs, including only the bottom square states. Rather than just the middle-bottom state with one agent on each side, we actually consider these four agent and state pairs to be the rightward facing state of our multi-agent 1-toggle. Importantly these are the only reachable agent and state pairs and none of them have more than one agent on any side of the gadget.

Now consider the case where the gadget is rightward facing and there is a second agent on the right side. None of the previously reachable states will allow more than one left transition, and thus the second agent is unable to interact with the gadget.

Finally we're back to the case of a rightward pointing gadget with an extra agent on the bottom. In this state there can be no more downward traversals than upward traversals and there can be at most one more rightward traversal than leftward traversals. Thus we cannot move extra agents to the left side of the gadget and we can move at most one extra agent to the right side of the gadget. Further, after making two right traversals, there be at least two agents on the rightward side of the gadget and the gadget must be in a right leaf state. Thus, once there are two agents on the rightward side, we are in one of the right square states above with an extra agent. This is exactly the situation where the multi-agent one-toggle has changed state and allowed an agent to traverse it.

\paragraph{Multi-agent Locking 2-Toggle}

The multi-agent locking 2-toggle will be comprised of one \rdni{}, four multi-agent 1-toggles, and six helper agents. It will allow an additional agent to interact with it as though it were a locking 2-toggle. Two helper agents will be located in the horizontal and vertical connections next to the \rdni{}, and the other four agents will be external, each adjacent to one of the multi-agent 1-toggles. Note, these external four agents will be shared between gadgets rather than duplicated.

The canonical unlocked state is shown in Figure~\ref{fig:multi-agent-locking-2-toggle} with the \rdni{} in the upper-left state, the 1-toggles pointing right and down, and the internal agents in the left and top sides respectively. If a second agent comes in from the top, it is able to cross the first 1-toggle, both agents can move down through the \rdni{}, and then the two agents can allow one of them to cross the second 1-toggle. We consider this resulting state to be the canonical up locked state. The \rdni{} is in the lower left state, the 1-toggles are pointed right and up, and the internal helper agents are on the left and bottom. The right locked state and transition to it are symmetric.

\begin{figure}
	\centering
	\includegraphics{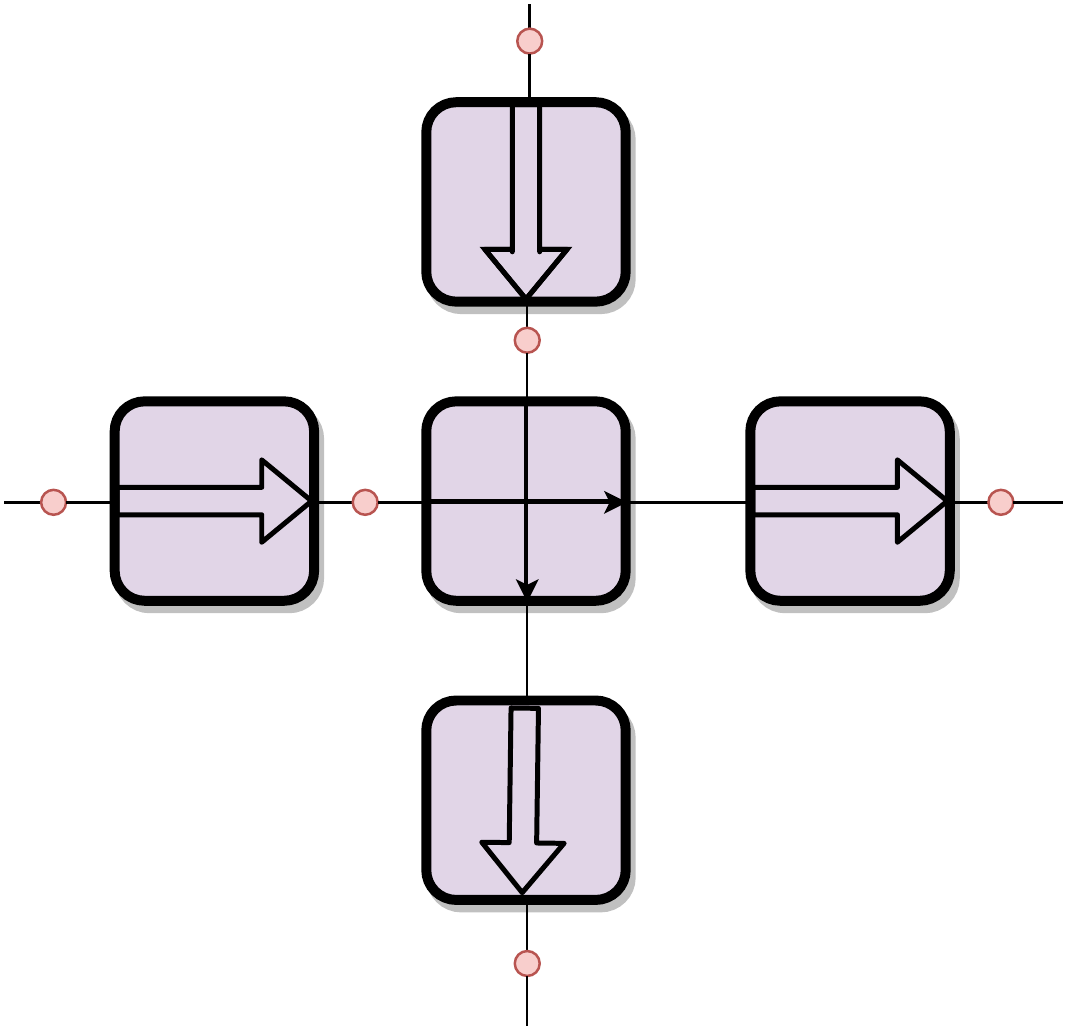}
	\caption{The multi-agent locking 2-toggle in the unlocked state.}
	\label{fig:multi-agent-locking-2-toggle}
\end{figure}

From the locked bottom state, if an agent arrives on the left, it could move through the 1-toggle, but only one agent would be able to pass the \rdni{}. Thus it would not be able to pass the second 1-toggle preventing a traversal of our gadget as desired. Notice in the locked states the internal helper agents will only be able to shift the state of the gadget over once, ensuring that they cannot move the gadget outside of the bottom or the right square states respective to being locked up or left. This is the property that ensures only one agent can cross the \rdni{} along the incorrect pathway while it is locked. The other traversals are prevented by the directionality of the multi-agent 1-toggles.

Now we must inspect the construction as a whole because our gadgets depend on never having more than two agents on any side and no more than three adjacent agents total. In this construction all pathways have two multi-agent 1-toggles between every \rdni{}. Since individual agents can freely cross the \rdni{}, let us imagine replacing them with simple connections. Now every multi-agent 1-toggle (except next to the start location) has exactly 1 agent on either side of it. There is only one additional agent in the entire construction, so from the properties of the multi-agent 1-toggles we know that we will never end up with more than one extra agent adjacent to any of the gadgets fulfilling our needed condition.

The states of the locking 2-toggles in the single agent 2-toggle reconfiguration can now be represented by the canonical agent location and gadget pairs in this multi-agent reduction. The movement of the agent is represented by the movement of the location containing two agents. Since the doubled agent can only move through the system of gadgets in the same way as the single agent in the original instance, and we've shown that the canonical pairs are reachable if the original instance is true, but are disjoint from reachable states which do not correspond to a valid traversal in the original instance we obtain PSPACE-hardness for the cooperative multi-agent reconfiguration problem with the \rdni{}.

\paragraph{Simulating Extra Agents}
Now we wish to simulate the multi-agent reduction with a single agent. Recall that we can directly build a (single agent) 1-toggle out of the \rdni{}. Also recall that our reduction ensured that no connection contained more than two agents at any given point in time. For each connection in the multi-agent instance, we connect two 1-toggles to that connection, each representing a potential agent. The other side of all the one toggles are connected together so that this central location has access to all of our simulated edges. If the multi-agent instance has an agent at some location, in the single-agent instance we direct a number of 1-toggles towards that connection for each agent there. All other 1-toggles are directed towards the central connection. Finally we start the agent in the central connection.

From the central connection, the agent is able to cross a 1-toggle 'instantiating' the agent it represents in the multi-agent problem. The agent is then in the same location and able to interact with original instance. If the agent then traverses a 1-toggle from some connection to the central connection, the flip in direction of that 1-toggle will allow the agent to return effectively 'remembering' the location of that agent in the multi-agent instance. Since the multi-agent problem never has more than two agents along the same connection, we need not worry about running out of 1-toggles to record the agent locations. If we pick one of the toggles to always flip when representing the presence of a single agent, we can directly map states of the gadgets onto pairs of gadget states and agent locations in the multi-agent instance, completing the reduction.

\subsection{Verified Gadgets and Shadow Gadgets}\label{sec:VerifiedGadgets}
In this section we will discuss a technique for generating hard gadgets. The main idea is constructing a gadget which behaves well when used like a hard gadget, but might also have other transitions which are allowed but put the gadget into some undesirable state.

First, we will pick some base gadget which we want to modify. Next we will add additional \emph{shadow states} to the gadget and additional transitions with the restriction that all newly added transitions must take the gadget to a shadow state. We call such a construction a \emph{shadow gadget} of the base gadget. This has the nice property that if the agent takes any transition that would not be allowed in the base gadget, then the gadget will always stay in a shadow state after that point.

\begin{theorem}\label{thm:shadow}
	Reconfiguration motion planning with a shadow gadget is at least as hard as reconfiguration motion planning with the base gadget.
\end{theorem}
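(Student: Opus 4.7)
The plan is to reduce reconfiguration for the base gadget to reconfiguration for the shadow gadget by a syntactic identity transformation: given an instance of reconfiguration for the base gadget, we use the \emph{same} system of gadgets, the \emph{same} start location, and the \emph{same} initial and target configurations, just reinterpreting each base gadget as its shadow version. Since every state of the base gadget is (by definition) also a state of the shadow gadget, and the target configuration uses only base (i.e., non-shadow) states, this is a well-defined reconfiguration instance for the shadow gadget, and clearly the reduction runs in polynomial (in fact logarithmic) space.

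The forward direction is immediate: any sequence of moves solving the base reconfiguration instance is also a legal sequence of moves in the shadow instance, because every base transition is still present in the shadow gadget, and executing such a sequence keeps every gadget in base states, reaching exactly the target configuration.

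For the reverse direction, the key observation is that the set of base states is \textbf{closed} under base transitions, and once the agent takes any non-base transition in some copy of the shadow gadget, that copy enters a shadow state and can never leave. This is because the definition of a shadow gadget stipulates that every \emph{added} transition ends in a shadow state, and since shadow states do not appear in the base gadget at all, every transition out of a shadow state is necessarily added, hence again ends in a shadow state. Therefore any solution to the shadow reconfiguration instance that puts a gadget into a shadow state at some point will leave that gadget in a shadow state forever after, contradicting the target configuration (which assigns only base states). Consequently any valid shadow solution uses only base transitions and so is also a valid base solution that reaches the target configuration.

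The main (and really only) obstacle is making the closure argument rigorous: one has to verify carefully that a shadow state cannot be escaped, which follows straight from the definition but requires being explicit that transitions out of shadow states are always added rather than inherited. Once this is clearly stated, the two directions combine to show that the base and shadow instances have exactly the same answer, establishing the stated hardness reduction.
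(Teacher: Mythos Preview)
Your proposal is correct and follows essentially the same approach as the paper: replace each base gadget by its shadow version, observe that base solutions remain valid, and argue that any shadow solution must avoid shadow states because they are absorbing while the target configuration contains only base states. Your write-up is in fact more explicit than the paper's about why shadow states are absorbing (transitions out of a shadow state must be added transitions, hence land in shadow states), but the underlying argument is identical.
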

\begin{proof}
	We simply take the hard instance for the problem with the base gadget and replace it with a shadow gadget. If a solution exists in the initial instance, it will still be a solution with the shadow gadgets. If the agent ever tries to take a transition in a shadow gadget that would not have been allowed in the original instance, that gadget will now be in a shadow state. Since no target state is a shadow state and all transitions from shadow states lead to shadow states such a path cannot be a solution.
\end{proof}

\begin{corollary}
	Reconfiguration motion planning is PSPACE-complete for some gadgets which never change their traversability.
\end{corollary}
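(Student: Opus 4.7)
The plan is to apply Theorem~\ref{thm:shadow} to a base gadget whose reconfiguration problem is already PSPACE-complete, choosing the shadow transitions so that every state of the resulting gadget admits exactly the same set of traversals. Containment in PSPACE follows by the standard nondeterministic guess-and-check argument of Lemma~\ref{lem:visiting in pspace}, so the entire task is to design an appropriate shadow gadget.

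First I would take as the base gadget the locking 2-toggle of Figure~\ref{fig:L2T}. Reachability for it is PSPACE-complete by \cite{gadgets2}, and since it is reversible, the first theorem of Section~\ref{sec:ReconfigurationReversible} upgrades this to PSPACE-completeness of reconfiguration. Next I would adjoin a single ``sink'' shadow state $Z$ and let $T$ denote the union, over all original states, of the (entrance, exit) pairs that appear as transitions in the locking 2-toggle. The shadow gadget $G'$ retains every original transition; in addition, for every non-shadow state $S$ and every pair $(a,b)\in T$ that is not already realized from $S$, I add a transition $a \to b$ in $S$ landing in $Z$; and in $Z$ itself every pair $(a,b)\in T$ is realized as a self-loop back to $Z$. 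Every newly added transition lands in $Z$, and $Z$ is absorbing, so $G'$ is indeed a shadow gadget of the locking 2-toggle. By construction, the set of realizable traversals from every state is exactly $T$, so $G'$ never changes its traversability.

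Theorem~\ref{thm:shadow} then delivers PSPACE-hardness for reconfiguration of $G'$ directly from PSPACE-hardness for the locking 2-toggle. The main point I would want to double-check is that the PSPACE-hard reconfiguration instance supplied by the base reduction uses only original (non-shadow) states in its target configuration; this is automatic because the reduction is specified for a gadget that has no shadow states in the first place, but it is worth flagging, since it is exactly what guarantees that strayings into $Z$ cannot accidentally satisfy the target configuration---the driving idea behind Theorem~\ref{thm:shadow}. No step requires heavy machinery, so I do not foresee a genuine obstacle beyond this piece of bookkeeping.
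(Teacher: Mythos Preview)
Your proposal is correct and follows essentially the same approach as the paper: start from a reversible base gadget with PSPACE-complete reconfiguration, add a single absorbing shadow state together with enough extra transitions to make the traversability constant, and invoke Theorem~\ref{thm:shadow}. The only cosmetic differences are that the paper uses the 2-toggle rather than the locking 2-toggle and simply adds a transition from every state and location to every other location (landing in the shadow state), rather than restricting to the set $T$ you describe; both variants yield a shadow gadget with unchanging traversability.
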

\begin{proof}
	Take a gadget for which reconfiguration motion planning is PSPACE-complete, such as the 2-toggle. Now construct a shadow gadget with one shadow state that has transitions between all of the locations. Add transitions starting from every state and location and going to every other location and the shadow state. The resulting gadget always has available traversals from every location to every other location and thus never changes traversability. However, the reconfiguration problem is hard by Theorem~\ref{thm:shadow}.
\end{proof}

A \emph{verified gadget} is a shadow gadget with some additional structure. From a shadow gadget we add two or more locations, the \emph{verifying locations} to the gadget. We additionally may add \emph{verified states} which can only be reached by transitions from the added locations while the gadget is in normal states. We now add transitions among the verifying locations such that these locations can be connected in a series so there is always a traversal from the first to the last location if the gadget is in a normal state, and there is no such traversal if the gadget is in a shadow state.  We call this added traversal the \emph{verification traversal}. 

\begin{theorem}\label{thm:verified}
	Reachability motion planning with a verified gadget is at least as hard as reachability motion planning with the base gadget.
\end{theorem}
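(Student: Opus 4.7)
The plan is to reduce reachability for the base gadget $B$ to reachability for the verified gadget $V$ by appending a \emph{verification gauntlet} that forces the agent to verify every copy of $V$ after reaching the original target. Given an instance of reachability for $B$ with system $\Sigma$, start $s$, and target $t$, construct a new instance by replacing each copy of $B$ in $\Sigma$ with a copy of $V$ in the matching state. Enumerate these copies as $G_1,\ldots,G_n$ and let $v_1^i,\ldots,v_{k_i}^i$ be the verifying locations of $G_i$ in the series order given by the definition. Add the connections $t\leftrightarrow v_1^1$ and $v_{k_i}^i\leftrightarrow v_1^{i+1}$ for each $i<n$, and declare the new target $t'=v_{k_n}^n$.

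For the forward direction, any solution to the original instance lifts to one for the new instance: the agent replays it using only base transitions, so every copy of $V$ stays in a normal state and behaves exactly like $B$, arriving at $t$; it then walks the gauntlet and performs each verification traversal in turn, each available because its gadget is still in a normal state.

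For the reverse direction, suppose $P'$ is a solution path from $s$ to $t'$. Since $t'$ lies at the end of the gauntlet, $P'$ must complete the verification traversal of each $G_i$, and these traversals are available only while $G_i$ is in a normal state. Because every non-base transition leads to a shadow state and shadow states are absorbing under the added transitions, once $G_i$ undergoes a non-base transition it can never be verified. Hence no non-base transition in $G_i$ occurs before $P'$ verifies $G_i$, for every $i$, and in particular the prefix of $P'$ up to its first arrival at $v_1^1$ contains no non-base transition at all. This prefix is therefore a legal play in $\Sigma$, and since $v_1^1$ is wired externally only to $t$, the prefix reaches $t$, solving the original reachability instance.

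The main obstacle I expect is the possibility that $P'$ exits the gauntlet partway through---for instance, after verifying $G_1$, slipping back into the base subsystem and performing extra moves before re-entering the gauntlet to finish the remaining verifications. I would handle this by noting that the argument above extracts its base-gadget solution purely from the \emph{pre-gauntlet} prefix of $P'$: any non-base transition in $G_j$ committed before that prefix ends prevents $G_j$'s later verification, so no such transition can appear in the prefix regardless of what happens afterward. Combined with the fact that verifying locations are added separately from base locations and connected to the outside only through the specific edges we introduced, this makes the extraction of a base-gadget solution robust to any such excursions.
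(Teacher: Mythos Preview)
Your proposal is correct and follows essentially the same approach as the paper: replace each base gadget by its verified version and chain the verification traversals from the original target to a new target, arguing that any non-base transition would push a gadget into an absorbing shadow state and thereby block the gauntlet. Your treatment of the reverse direction is in fact more careful than the paper's---you explicitly isolate the pre-gauntlet prefix and handle possible excursions back into the base system---but the construction and the core idea are identical.
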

\begin{proof}
	We simply take the hard instance for the problem with the base gadget and replace it with a verified gadget and then make a path from the original target location through the verification traversals of all of the gadgets to a new target location. If a solution exists in the initial instance, then performing that solution will bring the agent to the start of the verification traversals and all of those traversals will be possible since the gadgets are all in normal states. If the agent ever tries to take a transition in a verifiable gadget that would not have been allowed in the original instance, that gadget will now be in a shadow state and at least one of the necessary verifiable traversals will now be possible.
\end{proof}

\paragraph{Monotonically Opening and Closing Gadgets}
 A \emph{monotonically opening} gadget is one in which the traversability of the gadget never decreases. That is to say for all states $t$ reachable from a given state $s$, and for all pairs of locations $a$ and $b$, if there is a transition from $a$ to $b$ in $s$ then there is a transition from $a$ to $b$ in $t$. A \emph{monotonically closing} gadget is one in which the traversability of the gadget never increases. That is to say, for all states $t$ reachable from a given state $s$, and for all pairs of locations $a$ and $b$, if there is no transition from $a$ to $b$ in $s$, then there is no transition from $a$ to $b$ in~$t$.

We now use verified gadgets to show that there are both monotonically opening and monotonically closing gadgets for which reachability is PSPACE-complete.  This is surprising because the number of changes of traversability in such a system of gadgets is bounded, so one might suspect such a class to fall in NP.

\begin{corollary}
	Reachability motion planning is PSPACE-complete even for monotonically closing gadgets.
\end{corollary}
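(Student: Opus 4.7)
The plan is to apply Theorem~\ref{thm:verified} to a base gadget $G$ with PSPACE-complete reachability (for example, the locking 2-toggle) and engineer the resulting verified gadget $G'$ to be monotonically closing by forcing every normal state to share one common traversability set.

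Concretely, let $T$ be the union, over all states $S_i$ of $G$, of the set of transitions available in $S_i$. I would build $G'$ with one normal state $N_i$ per state $S_i$ of $G$, a single dead shadow state $S_\bot$ with no outgoing traversals, and two verifying locations joined by a single verification traversal. In each normal state $N_i$, every pair $(a,b)\in T$ is a traversal and the verification traversal is also available; a pair $(a,b)$ that is legal in $S_i$ (in the base gadget) goes to the normal state corresponding to the resulting state of $G$, while a pair $(a,b)\in T$ that is not legal in $S_i$ goes to $S_\bot$.

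Next I would verify that $G'$ is monotonically closing: all normal states have the same traversability (namely $T$ together with the verification traversal), so transitions among them do not change traversability; any transition into $S_\bot$ strictly shrinks traversability to the empty set; and $S_\bot$ has no outgoing transitions, so nothing ever reopens. Hence $\mathrm{Traversals}(t)\subseteq\mathrm{Traversals}(s)$ whenever $t$ is reachable from $s$. Then I would invoke Theorem~\ref{thm:verified}: a PSPACE-hard reachability instance over $G$ reduces to a reachability instance over $G'$ by replacing each copy of $G$ with $G'$ and routing the original target through the verification traversals of every copy to a new target. A legitimate base-gadget solution keeps every $G'$ in normal states and succeeds through all verifications; any ``illegal'' traversal lands some copy of $G'$ in $S_\bot$, permanently disabling its verification traversal and blocking the new target. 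Combined with the standard PSPACE containment for gadget reachability (nondeterministic polynomial-space search plus Savitch's theorem), this yields PSPACE-completeness.

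The main obstacle is choosing the shared traversability $T$ so that every normal state faithfully simulates the corresponding base-state without giving the agent any ``illegal but useful'' shortcut; taking $T$ to be the full union of per-state transition sets and sending every illegal traversal into the dead shadow state $S_\bot$ handles this cleanly, because $S_\bot$ has no outgoing transitions and therefore cannot be used to make any further progress outside the verification reduction.
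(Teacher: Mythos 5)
Your proposal is correct and takes essentially the same approach as the paper: construct a verified shadow gadget from a base gadget with PSPACE-complete reachability so that all normal states share one fixed traversability and the only possible change is losing the verification traversal upon entering a shadow state, then apply Theorem~\ref{thm:verified}. The only (immaterial) difference is that the paper keeps every location-to-location traversal open even in the shadow state, while your shadow state $S_\bot$ is dead; both variants are monotonically closing and the reduction is identical.
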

\begin{proof}
	Take a gadget for which reconfiguration motion planning is PSPACE-complete, such as the 2-toggle. Now construct a shadow gadget with one shadow state that has transitions between all of the locations. Add transitions starting from every state and location and going to every other location and the shadow state. The resulting gadget always has available traversals from every location to every other location and thus never changes traversability. Next, we convert it into a verified gadget by adding a pair of locations $A$ and $B$ where there is a traversal between them if the gadget is in a normal state and no traversal if it is in a shadow state. This gadget now only removes traversals, but by Theorem~\ref{thm:verified} its reachability problem is PSPACE-complete.
\end{proof}

\begin{corollary}
	Reachability motion planning is PSPACE-complete even for monotonically opening gadgets.
\end{corollary}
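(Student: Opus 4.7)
The plan is to mirror the monotonically closing construction from the previous corollary, but to dualize the verification so that the verification traversal starts closed and opens only when the gadget is brought into a designated ``finalized'' sink state. I would start from a base gadget whose reconfiguration is PSPACE-complete (e.g., the 2-toggle) and build a shadow gadget over its locations whose traversability among those locations is always all-to-all, exactly as in the previous corollary. Then I would add a new sink state $N_f$ together with a ``finalize'' transition from the normal state corresponding to the 2-toggle's target configuration into $N_f$, and verifying locations $A$ and $B$ whose traversal $A \to B$ is open only in $N_f$ and closed in every other state (normal or shadow).

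To verify monotonic opening I would check every type of state-to-state transition: transitions among non-finalized normal states and into the shadow state preserve both the all-to-all traversability on the original locations and the closed status of $A \to B$; the finalize transition opens $A \to B$ without removing any traversal; and $N_f$ is a sink whose (maximal) traversability cannot be reduced by any self-loop. Hence no reachable transition ever removes a traversal.

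For PSPACE-hardness I would reduce from a hard 2-toggle reconfiguration instance with target configuration $C$, replacing each 2-toggle with a copy of the new gadget (choosing each copy's designated target normal state to match $C$) and chaining verification traversals $A \to B$ across every copy from the start to a new target location, in the spirit of Theorem~\ref{thm:verified}. Any non-base transition would irreversibly send a gadget into the shadow state and permanently prevent its finalization, so the agent is forced to simulate the base 2-toggles via base transitions only; thus the new target location is reachable exactly when the underlying reconfiguration instance is solvable. Membership in PSPACE follows from a standard guess-and-check argument in nondeterministic polynomial space together with Savitch's theorem.

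The main obstacle will be handling the sink nature of $N_f$: once a gadget is finalized, its state is frozen, so the agent must order its finalizations so as not to knock any still-unfinalized gadget off its intended target state during subsequent movement. The plan is to show this is always possible when the underlying 2-toggle reconfiguration has a solution---for instance, by first completing the reconfiguration to $C$ using only base transitions and then finalizing each copy in an order that routes the agent through already-finalized copies, which remain fully traversable via their all-to-all structure but immune to any state perturbation.
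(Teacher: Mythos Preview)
Your construction takes a different route from the paper and has a genuine gap in the converse direction of the reduction. The paper adds \emph{two} pairs of verifying locations, $A,B$ and $C,D$: the traversal $A\to B$ is available in \emph{every} state (from any normal state it moves the gadget to a single ``verified'' state, from a shadow state it stays in a shadow state), while $C\to D$ is open only in the verified state. The verification chain is then $A_i\to B_i\to C_i\to D_i$ through each copy, and Theorem~\ref{thm:verified} applies directly, reducing from \emph{reachability} for the base gadget rather than from reconfiguration. Because the verified state is reachable from \emph{any} normal state, there is no per-copy target state and no finalization-ordering issue; a single fixed gadget suffices.

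In your version the finalize transition is available only from the one normal state matching that copy's target in $C$, and you reduce from reconfiguration. The forward direction can plausibly be made to work using reversibility, as you sketch. The problem is the converse. Once a copy is finalized it sits in the sink $N_f$, where every original-location traversal is a self-loop, so that copy becomes a free pass that no longer constrains the agent like a 2-toggle would. Hence the agent may be able to bring the copies to their individual target states and finalize them \emph{one at a time}---each time working in a strictly more permissive system in which the already-finalized copies are free passes---without there ever being a single moment at which all copies are simultaneously in the target configuration $C$. Your sentence ``the agent is forced to simulate the base 2-toggles via base transitions only; thus the new target location is reachable exactly when the underlying reconfiguration instance is solvable'' elides precisely this point: after the first finalization the agent is no longer simulating the original 2-toggle system. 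So solvability of your reachability instance does not imply solvability of the reconfiguration instance, and the reduction does not go through as stated. (A smaller loose end: you never specify on which locations the finalize transition lives; if it sits on a fresh location pair that is open only in the designated normal state, then base transitions out of that state close it, violating monotonic opening.)
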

\begin{proof}
	Take a gadget for which reconfiguration motion planning is PSPACE-complete, such as the 2-toggle. Now construct a shadow gadget with one shadow state that has transitions between all of the locations. Add transitions starting from every state and location and going to every other location and the shadow state. The resulting gadget always has available traversals from every location to every other location and thus never changes traversability. Next, we convert it into a verified gadget by adding two pairs of locations $A,B$ and $C,D$. There is a transition between $C$ and $D$ only if the gadget is in the verified state. There is additionally a transition between $A$ and $B$ from all normal states to the verified state, and also transitions between them from shadow states to shadow states. This gadget now only adds the traversal between $C$ and $D$ and never removes traversals, but by Theorem~\ref{thm:verified} its reachability problem is PSPACE-complete.
\end{proof}

\subsection{Reconfiguration and DAG-like Gadgets}
\label{sec:NPReDAG}
In \cite{gadgets2} DAG gadgets were studied as a naturally bounded class of gadgets and a broader notion LDAG was instigated in \cite{lynch2020framework}. In particular DAG gadgets have state transition graphs which are directed acyclic graphs (dags) and LDAG gadgets have state transition graphs which are dags with the addition of self-loops at some of the vertices. We now consider a generalized class of gadgets and describe cases in which the reachability question remains in NP. 

We call a gadget \emph{$F$-DAG-like} if its state graph can be decomposed into disjoint subgraphs for which those subgraphs are from some family of gadgets $F$ and all transitions between these subgraphs form a DAG. We call these transitions between the subgraphs \emph{DAG-like transitions}. In this case LDAG gadgets are $F$-DAG-like with some family of single state gadgets and all of the state changing transitions are the DAG-like transitions.

With this notion, one may wonder what gadgets can be used in an $F$-DAG-like gadget and have the resulting gadget still be in NP. We initially believed this would be true for gadgets with non-interacting tunnels, however, below we give an example of such a gadget where the reachability question is PSPACE-complete. We then show that if $F$ is a family of gadgets for which the reconfiguration problem is in NP, then the reconfiguration and reachability problems for $F$ and for $F$-DAG-like gadgets are also in NP. We call $F$-DAG-like gadgets where $F$ is a family of gadgets for which the reconfiguration problem is in NP \emph{NPReDAG gadgets}.

\begin{theorem}
	Reconfiguration motion planning with NPReDAG gadgets is in NP.
\end{theorem}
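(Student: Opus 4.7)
The plan is to build a polynomial-size nondeterministic witness out of a bounded sequence of DAG-like transitions, together with one $F$-reconfiguration certificate per ``segment'' in between them.

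First I would bound the total number of DAG-like transitions in any solution. Because the subgraphs of each copy of the NPReDAG gadget $G$ are arranged as a DAG, each copy can undergo at most a constant number (depending only on $G$) of DAG-like transitions, so in any valid play on an $n$-gadget instance the agent performs at most $M = O(n)$ DAG-like transitions in total.

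Next I would structure the NP witness as two pieces: (i) an ordered list $(t_1, \ldots, t_m)$ with $m \le M$ of DAG-like transitions, each specifying the gadget copy used, the source and target subgraphs, and the entry and exit locations; and (ii) for each of the $m+1$ intervals between consecutive DAG-like transitions (including before $t_1$ and after $t_m$), an $F$-reconfiguration certificate for the sub-problem that takes place during that interval. Within any one interval no DAG-like transitions occur, so every copy of $G$ remains inside a single $F$-subgraph and the system behaves exactly as a system of gadgets from $F$; the interval's sub-problem is therefore a genuine $F$-reconfiguration instance, whose initial configuration is inherited from the previous interval plus the preceding DAG-like transition, whose target configuration must enable the next DAG-like transition, and whose target configuration for the last interval is the original reconfiguration target. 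Because $F$-reconfiguration is in NP by hypothesis, each sub-certificate is polynomial; combined with the polynomially many intervals, the whole witness is polynomial. Verification simply replays it: for each interval, check the $F$-reconfiguration sub-certificate, update the system state, verify that the next DAG-like transition $t_i$ is legal in that state and at the claimed location, and apply it; at the end, confirm that the resulting configuration equals the target.

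The main obstacle is that each interval's sub-problem is not quite a pure $F$-reconfiguration instance, because the agent must also arrive at the specific location from which the next DAG-like transition is taken. I would handle this by requiring each $F$-reconfiguration sub-witness to be an explicit polynomial-length move sequence, so that the agent's endpoint is encoded in the witness and can be checked directly, and by observing that ``reconfigure while ending at a prescribed location'' is no harder than plain reconfiguration: the location constraint can be absorbed into the target configuration by strengthening the required states (so that the final move has to originate at the prescribed location) or by routing through auxiliary copies of $G$ used as witnesses of the agent's final position. Formalizing this location-matching step cleanly, and confirming that the resulting augmented problem is still in NP under the hypothesis that $F$-reconfiguration is, is where the bulk of the bookkeeping in a full proof will go.
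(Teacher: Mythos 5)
Your proposal is correct and takes essentially the same route as the paper: the certificate is the ordered list of DAG-like transitions together with the gadget states before and after each one, plus an $F$-reconfiguration certificate (guaranteed polynomial since $F$-reconfiguration is in NP) for each interval between consecutive DAG-like transitions, with the verifier checking the transitions' legality and the final target state. The location-matching subtlety you raise is glossed over in the paper's own proof as well; just note that your suggested fix of demanding an explicit polynomial-length move sequence is not justified by NP membership alone (an NP certificate for $F$-reconfiguration need not be a move sequence), so that part would need the alternative absorption-into-the-target-configuration argument rather than the explicit-path one.
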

\begin{proof}
	We give the following certificate for 1-player motion planning with an NPReDAG gadget. We list all of the DAG-like transitions taken in the solution and the states of all of the gadgets before and after the transition. Further, for each pair of adjacent DAG-like transitions we imagine the reconfiguration problem on the system of gadgets which is only comprised by the reconfigurable super-node gadgets and takes this system from the state after the last DAG-like transition to the state before the next DAG-like transition. This problem is solvable in NP by definition, so we provide each of these certificates. The verifier can now check in polynomial time that the final state is the target state, that the polynomial many DAG-like transitions are valid transitions and take the given pre-transition state to the post-transition state, and that the (polynomially many) portions of the path between the DAG-like transitions have some valid path performing that reconfiguration. 
\end{proof}

\begin{theorem}
	Reachability motion planning is in NP for gadgets where reconfiguration motion planning is in NP.
\end{theorem}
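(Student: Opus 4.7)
The plan is to give an NP algorithm for reachability by reducing it to reconfiguration, using the assumed NP algorithm for reconfiguration as a subroutine. The main idea is to augment the reachability instance with a small ``witness'' gadget placed at the target location $t$, arranged so that the witness's state can be changed only by the agent visiting $t$; this turns reachability into a reconfiguration question in a polynomially larger instance.

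If every gadget in the family $F$ has a trivial state-transition graph (no transition out of any state), then no gadget in any $F$-instance ever changes state, the system is effectively a static directed graph, and reachability is already in NL and hence in NP. So I assume $F$ contains some gadget $g$ with a state $\sigma_0$ admitting a transition to a different state $\sigma_1$; fix such $g$, $\sigma_0$, and $\sigma_1$ once and for all.

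Given a reachability instance $(S, C_0, s, t)$ with gadgets from $F$, I would form an augmented instance $S'$ by adding a single fresh copy of $g$, initialized to $\sigma_0$, with every one of its locations identified with $t$ in the connection graph. Then the agent can perform a traversal of the added $g$ exactly when it is at $t$, and any such traversal leaves the agent still at $t$; in particular the added $g$ can be brought to state $\sigma_1$ if and only if the agent visits $t$ at least once. The NP algorithm for reachability on $(S,C_0,s,t)$ then nondeterministically guesses a configuration $C_*$ of $S'$ in which the added $g$ is in $\sigma_1$ and invokes the assumed NP algorithm for reconfiguration on $(S', C_0', C_*)$, accepting iff that invocation accepts. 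Combining the outer guess with the inner nondeterminism of the reconfiguration verifier yields a single NP computation.

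Correctness is straightforward: if $t$ is reachable in $S$, the agent can follow the same path in $S'$ and then perform traversals of the added $g$ to put it in $\sigma_1$, ending in some $C_*$ that the algorithm correctly guesses; conversely, any reconfiguration of $S'$ that leaves the added $g$ in $\sigma_1$ must traverse it, forcing the agent to visit $t$ in $S'$, and since traversals of the added $g$ never move the agent anywhere, the underlying sequence of $S$-moves is a valid path from $s$ to $t$ in the original instance. The main obstacle is ensuring the added $g$ cannot serve as a shortcut through $S'$ or enable moves not available in $S$; gluing every one of its locations to the single vertex $t$ (rather than to distinct vertices elsewhere) resolves this cleanly, and it also guarantees that, if reaching $\sigma_1$ from $\sigma_0$ requires a sequence of several traversals inside $g$'s state-transition graph, the agent can execute the whole sequence in place at $t$.
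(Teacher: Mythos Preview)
Your proof is correct and follows essentially the same approach as the paper: add a state-changing gadget at the target location (the paper phrases this as ``a loop with a gadget''), handle the degenerate case where no gadget can change state by noting reachability is then in NL, and otherwise guess a final configuration with the added gadget in its changed state and appeal to the NP certificate for reconfiguration. Your choice to identify \emph{all} locations of the added gadget with $t$ is a clean way to ensure it provides no shortcut, matching the paper's ``loop'' construction.
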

\begin{proof}
	We now essentially want to `guess' the final configuration that the system will be in when the agent solves the reconfiguration problem and then solve the reconfiguration problem. However, this strategy also needs to verify that the agent actually reaches the target location. To do this first we take the reachability instance and add at the target location a loop with a gadget that has access to a transition with a state change. If there is none, then both the reconfiguration and reachability problems are trivially in NL. To be able to change the state of the added gadget, an agent must have reached the location of the loop. Thus we will take as a certificate a final configuration of the system of gadgets which has the added gadget in a different state, as well as the certificate for the reconfiguration problem from the initial state to this new target state.
\end{proof}

\begin{corollary}
	Reachability motion planning with NPReDAG gadgets is in NP.
\end{corollary}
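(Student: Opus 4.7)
The plan is to obtain this corollary as an immediate consequence of the two theorems immediately preceding it, chained together. The NPReDAG class was defined specifically so that its ``super-node'' family $F$ has reconfiguration in NP; the first theorem lifts this property to the whole DAG-like gadget, and the second theorem converts an NP reconfiguration upper bound into an NP reachability upper bound.

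Concretely, first I would cite the theorem stating that reconfiguration motion planning with NPReDAG gadgets is in NP; this gives us that the class of NPReDAG gadgets itself has reconfiguration in NP. Then I would invoke the theorem that reachability is in NP for any gadget whose reconfiguration is in NP, applied to the class of NPReDAG gadgets. Composing the two containments yields the desired membership of reachability in NP.

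There is essentially no obstacle here, since both ingredients have been proved in the preceding two theorems and the corollary is just their composition. The only thing to double-check is that the second theorem's hypothesis is stated at the level of a gadget family (so that it can be applied uniformly to all NPReDAG gadgets, not just a fixed one), which matches exactly the form in which the first theorem delivers its conclusion. Hence a one-line proof suffices: \emph{By the first theorem, reconfiguration for NPReDAG gadgets is in NP, and then by the second theorem reachability for NPReDAG gadgets is in NP.}
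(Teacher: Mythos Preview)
Your proposal is correct and matches the paper's intent exactly: the corollary is stated without proof precisely because it is the immediate composition of the two preceding theorems, and your one-line argument is all that is needed.
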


\subsection{Reconfiguration Can Be Easier}
\label{sec: Reconfiguration Easier}

In this section we introduce the Labeled Two-Tunnel Single-Use gadget for which the reachability question is harder than the reconfiguration problem. Figure~\ref{fig:reconfiguration-easier} shows the Labeled Two-Tunnel Single-Use gadget. It is a DAG gadget where going through either tunnel of the gadget closes both of them; however, the states are distinguished based on which tunnel was traversed. This is a DAG gadget with a forced distant door closing, so it is NP-complete by Theorem~22 in \cite{gadgets2}. We now give a polynomial time algorithm for the reconfiguration problem.

\begin{figure}
	\centering
	\includegraphics{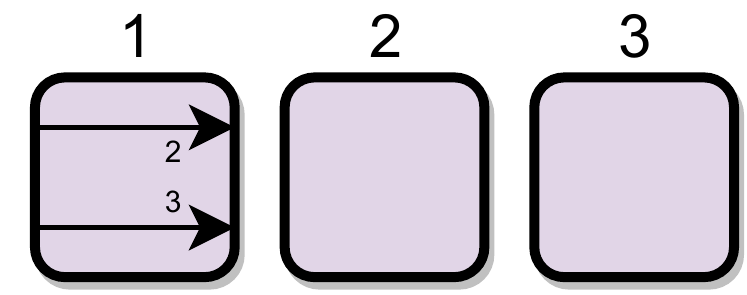}
	\caption{The Labeled Two-Tunnel Single-Use gadget.}
	\label{fig:reconfiguration-easier}
\end{figure}

\begin{theorem}
	Reconfiguration motion planning with the Labeled Two-Tunnel Single-Use gadget is in P.
\end{theorem}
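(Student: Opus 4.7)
The plan is to reduce the reconfiguration problem to deciding existence of an Eulerian trail in an auxiliary multigraph. The key observation is that each gadget is single-use, so the target configuration precisely specifies which gadgets must be traversed exactly once (and on which tunnel) and which must remain untouched. Let $U$ be the set of (gadget, tunnel) pairs thus determined: a gadget with target ``top used'' contributes the top tunnel, one with target ``bottom used'' contributes the bottom tunnel, and one whose target equals the initial state contributes nothing.

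First I would contract the external connection graph: take as vertices of an auxiliary graph $G'$ the connected components of the connection graph, so that movement within a component is free. For each pair $(g,\tau)\in U$, I add an edge in $G'$ between the components containing the two endpoints of tunnel $\tau$ of gadget $g$ (directed or undirected according to whether that tunnel is directed in the initial state). I would then prove that the reconfiguration instance is solvable if and only if $G'$ admits an Eulerian trail starting at the component containing the start location.

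For the forward direction, any valid agent path traverses each gadget in $U$ exactly once via its designated tunnel and never touches any other gadget, so projecting the path onto the contracted graph yields exactly such an Eulerian trail. For the reverse direction, given an Eulerian trail $e_1,\ldots,e_m$ in $G'$, the agent realizes it edge-by-edge: at step $i$ it sits in the component that is the tail of $e_i$, walks freely through that component to the entrance of the corresponding gadget tunnel, and traverses it, arriving at the other endpoint of $e_i$; this traversal is legal because the gadget is being used for the first time in the trail, so its tunnels are still open. Since the existence of an Eulerian trail from a prescribed starting vertex in a (directed or undirected) multigraph is decidable in linear time by checking weak connectivity of the subgraph of nonisolated vertices together with the standard degree or in/out-degree conditions, the overall procedure runs in polynomial time.

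The main obstacle is setting up the right abstraction and verifying the equivalence cleanly: the unused gadgets must be excluded entirely because any traversal of them would leave the gadget in a wrong state, while each used gadget contributes exactly one edge regardless of the direction the agent eventually uses. Once this framing is in place the problem collapses onto a classical Eulerian trail check, and the only remaining care is to branch between the directed and undirected Eulerian trail tests when analyzing $G'$.
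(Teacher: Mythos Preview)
Your proposal is correct and follows essentially the same approach as the paper: both observe that the target configuration uniquely determines which tunnel (if any) of each gadget must be traversed exactly once, and then reduce the question to checking for an Eulerian trail in the resulting multigraph. Your write-up is more explicit about contracting the connection graph and the start-vertex constraint, but the core idea is identical to the paper's proof.
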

\begin{proof}
	We will call states $2$ and $3$ \emph{terminal states}. Now let us consider what the initial and final configurations of the gadgets can look like. If the initial state is terminal, the gadget cannot be traversed. Similarly, if the initial and final configuration are both state $1$, then the gadget cannot have been traversed since there is no way to return the gadget to state $1$ after traversal. Thus the only case we need to consider is starting in state $1$ and ending in a terminal state. In this case, the labeling of the target configuration tells us which of the two tunnels must have been traversed to reach that state. We can thus construct the graph which uses only those tunnels and ask whether there is a path which traverses them all exactly once. Since this is exactly checking for the existence of an Eulerian path in a graph, we can solve it in polynomial time.
\end{proof}

It would be interesting to have an example of a gadget which has a different traversability in every state so that the easiness of such a reconfiguration problem would not be using a degeneracy which is indistinguishable in the reachability problem.

\bibliographystyle{plain}
\bibliography{refs}
\end{document}